\newcolumntype{R}[1]{>{\raggedleft\arraybackslash }b{#1}}
\newcolumntype{L}[1]{>{\raggedright\arraybackslash }b{#1}}
\newcolumntype{C}[1]{>{\centering\arraybackslash }b{#1}}
\definecolor{dkgreen}{rgb}{0,0.6,0}
\definecolor{gray}{rgb}{0.5,0.5,0.5}
\definecolor{mauve}{rgb}{0.58,0,0.82}
\tiny\color{gray},
\newcommand{\mb}[1]{\mathbf{#1}}
\newcommand{\ms}[1]{\boldsymbol{#1}}
\newcommand{\argmax}[2]{%
\smash{\mathop{{\rm argmax}}\limits_{#1}}\,#2} 
\newtheorem{theorem}{Théorème}
\newtheorem{proposition}[theorem]{Proposition}
\newtheorem*{remark}{Remark}
\theoremstyle{definition}
\newtheorem*{proof_}{Proof}
\newcommand{\diag}{\operatorname{diag}\diagfences}
\DeclarePairedDelimiter{\diagfences}{(}{)}
\newcommand{\ee}{\mathbb{E}}
\newcommand{\vv}{\mathbb{V}\mathrm{ar}}
\def\rr{{\mathbb R}}
\def\pp{{\mathbb P}}
\title{A generalization of the CIRCE method for quantifying input model uncertainty in presence of several groups of experiments}
\author[1]{Guillaume Damblin\corref{cor1}}
\ead{guillaume.damblin@cea.fr}
\author[2]{François Bachoc}
\author[3]{Sandro Gazzo
\fnref{fn3}}
\author[4]{Lucia Sargentini}
\author[4]{Alberto Ghione}
\address[1]{Université Paris Saclay - CEA DES/ISAS/DM2S/SGLS/LIAD - F-91191 Gif-sur-Yvette Cedex, France.}
\address[2]{Institut de Mathématiques de Toulouse, UMR5219 CNRS - 31062 Toulouse, France.}
\address[3]{eXalt - 19 rue d'Amiens, 59800 Lille, France.}
\address[4]{Université Paris Saclay - CEA DES/ISAS/DM2S/STMF/LATF - F-91191 Gif-sur-Yvette Cedex, France.}
\begin{document}

\maketitle

\begin{abstract}
The semi-empirical nature of best-estimate models closing the balance equations of thermal-hydraulic (TH) system codes is well-known as a significant source of uncertainty for accuracy of output predictions. This uncertainty, called model uncertainty, is usually represented by multiplicative (log-)Gaussian variables whose estimation requires solving an inverse problem based on a set of adequately chosen real experiments. One method from the TH field, called CIRCE\footnote{This is a French acronym: Calcul des Incertitudes Relatives aux Corrélations \'Elémentaires.}, addresses it. We present in the paper a generalization of this method to several groups of experiments each having their own properties, including different ranges for input conditions and different geometries. An individual (log-)Gaussian distribution is therefore estimated for each group in order to investigate whether the model uncertainty is homogeneous between the groups, or should depend on the group. To this end, a multi-group CIRCE is proposed where a variance parameter is estimated for each group jointly to a mean parameter common to all the groups to preserve the uniqueness of the best-estimate model. The ECME algorithm for Maximum Likelihood Estimation developed in \cite{Celeux10} is extended to the latter context, then applied to relevant demonstration cases. Finally, it is tested on a practical case to assess the uncertainty of critical mass flow assuming two groups due to the difference of geometry between the experimental setups.

\end{abstract}

\section{Introduction}

Numerical simulations are used in the field of nuclear engineering and research to design innovating systems and to achieve safety demonstration and licensing of power plants. In the three past decades, a new methodology has emerged, named Best Estimate Plus Uncertainty (BEPU) \cite{Unal11, Dauria12,Martin2021}, to reproduce nominal and accidental conditions of a nuclear power plant with a thermal-hydraulic system code. This methodology aims to simulate as realistically as possible System Response Quantities (SRQs), for example the peak cladding temperature during a loss-of-coolant accident.


Identifying and quantifying the sources of uncertainty tainting the BEPU simulations is thus a necessary stage as part of a Verification, Validation and Uncertainty Quantification (V\&V UQ) process \cite{Oberkampf10,Rohatgi22}. Those uncertainties mostly arise from numerical approximations (mesh size, nodalization) and physical simplifications due to the use of Closure Relationships (CRs) for mass, momentum and energy balance equations. Being semi-empirical, every CR suffers from model uncertainty, also called epistemic uncertainty. This is often represented by a multiplicative random factor that is intrinsically more suited to thermal-hydraulic experiments with SRQs of different scales (than an additive uncertainty) \cite{Cocci2022}. 



Such a probabilistic treatment of model uncertainty is convenient because of ease of forward propagation through Monte-Carlo sampling \cite{Glaeser08, Peregudov14}. 
The distribution of each random factor is usually estimated by means of some adequate experimental data collected from down-scaled simplified experimental
facilities, known as Separate Effect Tests (SETs) \cite{Aksan08}. A probabilistic equation relates those data to the corresponding simulations whose inputs contain the random factor(s) to estimate. Solving such an inverse problem can be carried out by several methods (see \cite{Wu21} for an up-to-date comprehensive review). Among them, the CIRCE method developed in CEA\footnote{The French Alternative Energies and Atomic Energy Commission (CEA) is a key player in research, development and innovation in nuclear and renewable energies.} is often employed \cite{DeCrecy01,Damblin19}. This method assumes that each factor follows a Gaussian distribution whose mean and variance are estimated by maximizing the
likelihood function related to the inverse problem. The statistical algorithm used by CIRCE for doing so is called ECME for Expectation/Conditional Maximization Either \cite{Celeux10}. 

Among the limitations in using CIRCE, the small
number of experimental data points can lead to poor statistical precision. In such a small data context, the reliability of the estimates also strongly depends on the adequacy of the experimental database with regard to the phenomenon of interest. 
Some guidelines have been presented in the SAPIUM project \cite{Baccou20} to choose adequately the tests of the database by means of the criteria of representativeness and completeness \cite{Baccou20, Baccou18}. Until now in using CIRCE, the adequate experimental database has been made up of several groups of tests reproducing the same physical phenomenon. In general, each group comes from different setups characterized by different geometries and/or input conditions. However, no posterior check has yet been performed to confirm whether or not CIRCE may assume the same magnitude of uncertainty for all the groups. In fact, the resulting uncertainty might be an average of the different probability distributions that would be estimated if CIRCE were applied on each group separately. The risk is therefore to underestimate the uncertainty of one or several group(s). This problem has motivated us to suggest a generalization of the CIRCE method, called the multi-group CIRCE. 

\medskip
Indeed, even though the multiplicative uncertainty factor is generally suitable to capture the different scales of the experimental data, this may not be fully true in some instances. Thus, the probability distribution of the factor should change between the groups in order to fully capture the scaling of the uncertainty. In this context, the multi-group CIRCE aims to evaluate whether or not the multiplicative uncertainty must embed a dependence on the geometry and/or input conditions. 

The multi-group CIRCE is based on a novel ECME algorithm for Maximum likelihood estimation which computes a Gaussian probability distribution for each group. More precisely, a specific variance parameter is estimated for each group but the mean parameter is kept constant across the groups. The latter assumption allows us to preserve the uniqueness of the best-estimate model, that we define as the reference model established by thermal-hydraulic experts shifted by the estimated mean.

\medskip
The paper is organized as follows. Section \ref{sec:CIRCE} recalls the statistical basis of the CIRCE method and guidelines for a proper use. Section \ref{sec:ECME_extension} presents the new ECME algorithm for several groups of experiments. Section \ref{sec:ex} deals with two academic examples that illustrate the multi-group CIRCE. Section \ref{sec:app} deals with a TH application to quantify the uncertainty of critical mass flow. Section \ref{sec:ccl} ends the
paper with conclusions.

\section{The CIRCE method}

\label{sec:CIRCE}

\subsection{Modeling}
The CIRCE method aims to quantify the uncertainty of TH models via multiplicative random factors that are assumed to follow (log-)Gaussian distributions. Let $M_{ref}$ be the reference mathematical expression of a TH model having scalar responses, such that
\begin{equation}
\label{model_uncertainty}
M_{\Lambda}(\mb{x},\theta):=\Lambda\times M_{ref}(\mb{x},\theta)
\end{equation}
with $\mb{x}\in\rr^{d}$ and $\theta\in\rr^{d^{\prime}}$ being respectively input physical conditions (e.g. initial and boundary conditions, geometry parameters) and constant calibration parameters. Since the reference model is multiplied by the (log-)Gaussian random variable $\Lambda$, the resulting model in Eq. (\ref{model_uncertainty}) is random and quantifies model uncertainty. Although there may be experimental and epistemic uncertainties, respectively in $\mb{x}$ \cite{Perrin2019} and $\theta$ \cite{Porter19, Cocci2022}, we assume them to be negligible in this work. 

Models $M_{\Lambda}$ act as closure relationships of the balance equations on which best-estimate system codes rely. Prior to propagation of model uncertainty to output SRQ(s) simulated by those codes, the probability distribution of $\Lambda$ has to be quantified using adequate experimental data collected
from down-scaled simplified facilities called SETs. As experimental data are often only indirect realizations
of $M_{\Lambda}$, the system code, denoted by $G(.)$, will be used to relate them to $\Lambda$. 

As the simulations carried out by $G(.)$ to predict those experimental data may depend on several closure relationships, $\Lambda$ is considered in the sequel as a $p$-dimensional \mbox{(log-)}Gaussian vector $\Lambda\thicksim\mathcal{N}(m,\Sigma)$ with mean vector $m\in\rr^{^p}$ and variance matrix $\Sigma \in M_{p}(\rr)$. Each component $\Lambda_j$ ($1\leq j\leq p$) of $\Lambda$ is thus a \mbox{(log-)}Gaussian variable applied to a particular closure relationship. Then, for $1\leq i\leq n$ with $n$ being the number of selected experimental data, we postulate the following equation relating the latter to the corresponding simulations$:$
\begin{equation}
Y_i=\,\,G(\ms{\lambda}_i,\mb{x}_i)+\epsilon_i
\label{IP}
\end{equation}
where
\begin{itemize}
\item $Y_i\in\rr$ is the $i$th observed experimental SRQ,
\item $\ms{\lambda}_i=(\lambda_{i1},\cdots,\lambda_{ip})^{T}\in\rr^{p}$ is the $i$th unobserved realization of $\Lambda$,
\item $\mb{x}_i$ contains the conditions of the $i$th experiment, assumed known,
\item $\epsilon_i$ is the $i$th observed realization of $E_i\thicksim \mathcal{N}(0,R_i)$ which models the experimental uncertainty tainting $Y_i$. If the variance $R_i$ is unknown, it is set to $0$. In this case, the estimated $\Sigma$ may be inflated to "incorporate" the unknown $R_i$. 
\end{itemize}
Based on Eq. (\ref{IP}), the goal of the CIRCE method is to estimate both the mean vector 
\begin{equation}
m=(m_1,\cdots,m_p)^{T}
\end{equation}
and variance matrix $\Sigma$. As the number $n$ of experimental SRQs is generally small (a few tens/hundreds in the majority of applications), the CIRCE method assumes that the matrix $\Sigma$ is diagonal to reduce the number of parameters being estimated$:$
\begin{equation}
\Sigma=\text{diag}(\sigma_1^2,\cdots,\sigma_p^2).
\end{equation}
This means that the CIRCE method sets that the uncertain factors involved in Eq. (\ref{IP}) are statistically independent among each other.
Furthermore, the method requires that Eq. (\ref{IP}) is linearized in $\ms{\lambda}_i$ at the reference model. Let $H_i=(H_{i1},\cdots,H_{ip})$ be the $p$-dimensional vector consisting, for $1\leq j\leq p$, of the partial derivatives $H_{ij}$ of the code output $G(\ms{\lambda}_i,\mb{x}_i)$ in $\lambda_{ij}$ evaluated at the linearization site $\ms{\lambda}_{nom}=(1,1,\cdots,1)^{T}\in\rr^{p}$ (the $1$-vector of size $p$). Then, Eq. (\ref{IP}) is transformed into
\begin{equation}
\label{eq_circe_linearized}
Y_i^{\prime}=H_i\ms{\lambda}_i^{\prime}+\epsilon_i
\end{equation}
where $Y_i^{\prime}=Y_i-G(\ms{\lambda}_{nom},\mb{x}_{i})$ and $\ms{\lambda}_i^{\prime}=\ms{\lambda}_i-\ms{\lambda}_{nom}\in\rr^{p}$. The vector $Y^{\prime}:=(Y^{\prime}_1,\cdots, Y^{\prime}_n)$ thus follows a multivariate Gaussian distribution with probability density
\begin{equation}
\label{likelihood}
L(Y^{\prime}|m,\Sigma)=\prod_{i=1}^{n}\frac{1}{\sqrt{2\pi(H_i\Sigma H_i^{T}+R_i)}}\exp{\Big(-\frac{1}{2}\frac{(Y_i^{\prime}-H_i(m-\ms{\lambda}_{nom}))^2}{H_i\Sigma H_i^{T}+R_i}\Big)}.
\end{equation}
This function is called the likelihood function. The CIRCE method aims
to maximize it provided that the matrix $H=[H^T_1,\cdots,H^T_n]^{T}\in M_{n,p}(\rr)$ has full rank. 

\medskip
Let us denote a Maximum Likelihood Estimator (MLE) by $\hat{\theta}:=(\hat{m},\hat{\Sigma})\in\rr^{2p}$. The algorithm implemented by CIRCE to compute an MLE is called ECME, for Expectation/Conditional Maximization Either. It can generate successive estimates $\theta^k=(m^k,\Sigma^k)$ which can converge to a MLE when $k$ increases \cite{LiuRubin1994}.

In the rest of the paper, the prime symbol $'$ used in both Eqs. (\ref{eq_circe_linearized}) and (\ref{likelihood}) will be omitted for simplicity. Hence $Y_i$ and $\ms{\lambda}_i$ have to be viewed as the experimental and unobserved model uncertainty realizations shifted respectively by $G(\ms{\lambda}_{nom},\mb{x}_i)$ and $\ms{\lambda}_{nom}$.

\subsection{Estimation via the ECME algorithm}

\label{standard_ecme}

The ECME algorithm relies on the so-called "complete likelihood", which is written as the joint
probability distribution of $Z:=(Y,\lambda)$ with $Y$ and $\lambda$ synthesizing the experimental samples $\{Y_i\}_{1\leq i\leq n}$ and unobserved samples $\{\ms{\lambda}_i\}_{1\leq i\leq n}$ respectively$:$
\begin{equation}
L(Z|m,\Sigma)=L(Y|\lambda,m,\Sigma)L(\lambda|m,\Sigma)
\end{equation}
where both densities are Gaussian, such that
\begin{equation}
L(Y|\lambda,m,\Sigma)=\prod_{i=1}^{n}\frac{1}{\sqrt{2\pi R_i}}\exp{\Big(-\frac{1}{2}\frac{(Y_i-H_i\ms{\lambda}_i)^2}{R_i}\Big)}
\end{equation}
and
\begin{equation}
L(\lambda|m,\Sigma)=\prod_{i=1}^{n}\frac{1}{\sqrt{(2\pi)^{p} |\Sigma|}}\exp{\Big(-\frac{1}{2}(\ms{\lambda}_i-m)^{T}\Sigma^{-1}(\ms{\lambda}_i-m)\Big)}.
\end{equation}
The log-likelihood is then equal to
\begin{equation}
l(Z|m,\Sigma)=l(Y|\lambda,m,\Sigma)+l(\lambda|m,\Sigma).
\end{equation}
After choosing a starting point $(m_0,\Sigma_0)$, the following two steps are run sequentially until
convergence to $\hat{\theta}$.
\begin{enumerate}[1.]
\item \underline{Step of Expectation (E)}: calculation of $$Q((m,\Sigma),(m^k,\Sigma^k))=\ee_{\lambda}[l(Z|m,\Sigma)|Y,m^k,\Sigma^k]$$ where the expectation is taken with respect to the distribution of $\lambda$ conditional on $(Y,m^k,\Sigma^k)$.
\item \underline{Steps of Conditional Maximization (CM)}:
\begin{itemize}

\medskip
\item CM1$:$ $\Sigma^{k+1}=\argmax{\Sigma}{Q((m,\Sigma),(m^k,\Sigma^k))}$,

\smallskip
\item CM2$:$ $m^{k+1}=\argmax{m}{l(Y|m^{k},\Sigma^{k+1})}$.
\end{itemize}
\end{enumerate}
Hereafter, both CM1 and CM2 steps lead to exact updates of $m^{k+1}$ and $\Sigma^{k+1}$ as functions of $m^{k}$ and $\Sigma^{k}$.

\begin{proposition}
\label{ECME}
The two steps of conditional maximization CM1 and CM2 are solved by the following updating formulas. For $1\leq j\leq p$
   \begin{align*}
   CM1. & \quad(\sigma^2)_j^{k+1}=(\sigma^2)_j^k + \frac{1}{n} \sum_{i=1}^n \left[\left(B_{ij}^k (V_i^k)^{-1} A_i^k\right)^2 - (B_{ij}^k)^{2} (V_i^k)^{-1}\right], \\
   CM2. & \quad m^{k+1}= \left( \sum_{i=1}^n H_i^T (V_i^{k+1})^{-1} H_i\right)^{-1} \left( \sum_{i=1}^n H_i^T (V_i^{k+1})^{-1} Y_i \right),
\end{align*}
with scalar quantities $A_i^k$, $B_{ij}^k$ and $V_i^k$ respectively equal to $Y_i-H_im^k$, $(\sigma_j^2)^{k}H_{ij}$ and $H_i\Sigma^{k+1}H_i^{T}+R_i$.
\end{proposition}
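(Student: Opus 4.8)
The plan is to follow the E/CM skeleton described just above: make the $Q$ functional explicit through the conditional law of each $\ms{\lambda}_i$ given $Y_i$, and then carry out the two maximizations, which turn out to be one-dimensional in each $\sigma_j^2$ and quadratic in $m$. First I would write the complete-data log-likelihood, up to additive constants not depending on $(m,\Sigma)$, as the sum of a term $-\tfrac12\sum_i (Y_i-H_i\ms{\lambda}_i)^2/R_i$ coming from $l(Y\mid\lambda,m,\Sigma)$ — which involves neither $m$ nor $\Sigma$, since the $R_i$ are fixed — and a term $-\tfrac n2\sum_j\log\sigma_j^2-\tfrac12\sum_{i,j}(\lambda_{ij}-m_j)^2/\sigma_j^2$ coming from $l(\lambda\mid m,\Sigma)$, using that $\Sigma$ is diagonal. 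Taking the conditional expectation given $(Y,m^k,\Sigma^k)$ then only requires the first two conditional moments of the $\ms{\lambda}_i$. Since under $(m^k,\Sigma^k)$ the pair $(Y_i,\ms{\lambda}_i)$ is jointly Gaussian with $\ms{\lambda}_i\sim\mathcal N(m^k,\Sigma^k)$ and $Y_i\mid\ms{\lambda}_i\sim\mathcal N(H_i\ms{\lambda}_i,R_i)$, the standard Gaussian conditioning formulas give $\ee[\ms{\lambda}_i\mid Y_i]=m^k+\Sigma^k H_i^T (V_i^k)^{-1}(Y_i-H_i m^k)$ and $\vv(\ms{\lambda}_i\mid Y_i)=\Sigma^k-\Sigma^k H_i^T (V_i^k)^{-1}H_i\Sigma^k$, where in the E-step the conditioning is at the current iterate, so the scalar normalizer is $V_i^k:=H_i\Sigma^k H_i^T+R_i$ (and I write $V_i^{k+1}:=H_i\Sigma^{k+1}H_i^T+R_i$ for the analogous quantity used in CM2). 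The key simplification is that $\Sigma^k$ is diagonal, so $(\Sigma^k H_i^T)_j=(\sigma_j^2)^k H_{ij}=B_{ij}^k$; reading off the $j$th component, resp.\ the $(j,j)$ entry, and writing $A_i^k:=Y_i-H_im^k$ yields the scalar identities $\ee[\lambda_{ij}\mid Y_i]-m_j^k=B_{ij}^k(V_i^k)^{-1}A_i^k$ and $\vv(\lambda_{ij}\mid Y_i)=(\sigma_j^2)^k-(B_{ij}^k)^2(V_i^k)^{-1}$.

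\textbf{CM1.} With $m$ held at $m^k$, the $l(Y\mid\lambda,\cdot)$ part of $Q$ is constant in $\Sigma$, so CM1 maximizes $\Sigma\mapsto-\tfrac n2\sum_j\log\sigma_j^2-\tfrac12\sum_j(\sigma_j^2)^{-1}\sum_i\ee[(\lambda_{ij}-m_j^k)^2\mid Y_i]$, which splits over $j$. Differentiating in $\sigma_j^2$, together with a routine check of the sign of the derivative, gives the maximizer $(\sigma_j^2)^{k+1}=\tfrac1n\sum_i\ee[(\lambda_{ij}-m_j^k)^2\mid Y_i]$. It then remains to expand $\ee[(\lambda_{ij}-m_j^k)^2\mid Y_i]=\vv(\lambda_{ij}\mid Y_i)+(\ee[\lambda_{ij}\mid Y_i]-m_j^k)^2$, the cross term vanishing because $\ee[\lambda_{ij}-\ee[\lambda_{ij}\mid Y_i]\mid Y_i]=0$, and to substitute the two scalar identities from the E-step; this reproduces the stated CM1 update.

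\textbf{CM2.} Here — and this is the feature that turns ECM into ECME — the object maximized is the \emph{observed-data} log-likelihood of Eq.~(\ref{likelihood}) evaluated at $\Sigma=\Sigma^{k+1}$, under which the $Y_i$ are independent with $Y_i\sim\mathcal N(H_i m,V_i^{k+1})$. Dropping the terms free of $m$, $l(Y\mid m,\Sigma^{k+1})$ equals $-\tfrac12\sum_i (Y_i-H_i m)^2/V_i^{k+1}$ up to a constant, a concave quadratic in $m$; setting its gradient to zero yields the weighted (generalized least squares) normal equations $\big(\sum_i H_i^T (V_i^{k+1})^{-1}H_i\big)m=\sum_i H_i^T (V_i^{k+1})^{-1}Y_i$. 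Since $H$ has full rank and each $V_i^{k+1}>0$, the matrix $\sum_i H_i^T (V_i^{k+1})^{-1}H_i$ is positive definite, hence invertible, and solving gives the stated CM2 formula.

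\textbf{Main obstacle.} The only genuinely delicate part is the E-step bookkeeping: computing the conditional mean and variance of $\ms{\lambda}_i$ and, above all, collapsing the $p\times p$ matrix expressions into the scalars $A_i^k$, $B_{ij}^k$, $V_i^k$ by exploiting the diagonality of $\Sigma^k$. Once those identities are in place, CM1 is one-variable calculus and CM2 is a weighted least-squares solve; the single conceptual point not to overlook is that CM2 maximizes the marginal likelihood at the already-updated $\Sigma^{k+1}$, which is exactly what distinguishes ECME from ECM.
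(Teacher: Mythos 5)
Your proposal is correct and follows essentially the same route as the paper's Appendix A: the heart of both arguments is the Gaussian conditioning step yielding $\ee[\lambda_{ij}\mid Y_i]-m_j^k=B_{ij}^k(V_i^k)^{-1}A_i^k$ and $\vv(\lambda_{ij}\mid Y_i)=(\sigma_j^2)^k-(B_{ij}^k)^2(V_i^k)^{-1}$, followed by a one-dimensional maximization in each $\sigma_j^2$ for CM1 and a weighted least-squares solve of the observed-data likelihood at $\Sigma^{k+1}$ for CM2. The only cosmetic difference is that the paper phrases CM1 via the exponential-family canonical parameter $\psi=(\sigma^2)^{-1}$ and the identity $\ee[t(\delta)\mid Y,\psi^k]=\partial\ln a/\partial\psi$, whereas you differentiate the expected log-likelihood directly in $\sigma_j^2$; your reading of $V_i^k$ as $H_i\Sigma^kH_i^T+R_i$ (rather than the $\Sigma^{k+1}$ written in the proposition statement) is the one actually used in the paper's proof.
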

\begin{proof}
See \ref{appendixA}. 
\end{proof}

\begin{remark}
Let us stress that the $CM1$ and $CM2$ formulas are special cases of those presented in \cite{Celeux10}. In the latter work, every $Y_i$ can be a vector and $\Sigma$ is not assumed to be a diagonal matrix, which results in updating $\Sigma^{k}$ instead of $\sigma^{k}$.
\end{remark}

The ECME algorithm should be run from several random starting points $(m^0,\Sigma^0)$ to escape possible local maxima. When both $p=1$ and $\epsilon_i=0$ for every $1\leq i\leq n$, the ECME algorithm is unnecessary because a MLE is simply given by the empirical mean and variance of the $Y_iH_i^{-1}$ samples.

\subsection{Statistical reliability of the estimates and falsification}

\label{CIRCEreliability}

\medskip
Let $\hat{\theta}:=(\hat{m},\hat{\Sigma})$ be a MLE. The variance of $\hat{\theta}$ will decrease with increasing $n$ and also depends on the sensitivity of the code responses to the Gaussian variables being estimated. 
\begin{proposition}
The Fisher information of $(m,\Sigma)$ is a block diagonal matrix whose diagonal blocks are respectively equal, for $1\leq j,k\leq p$, to
\begin{equation}
I(m_j,m_k)=\sum_{i=1}^{n}\frac{H_{ij}H_{ik}}{H_i\Sigma H_i^T+R_i}\,\,\,\,\,\,\,\text{and}\,\,\,\,\,\,\,I(\sigma_j^2,\sigma_k^2)=\frac{1}{2}\sum_{i=1}^{n}\frac{H^2_{ij}H^2_{ik}}{(H_i\Sigma H_i^T+R_i)^2}.
\end{equation}
\end{proposition}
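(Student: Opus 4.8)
The plan is to exploit that, after the linearization, the observations $Y_1,\dots,Y_n$ are independent univariate Gaussian variables, $Y_i\sim\mathcal{N}(\mu_i,v_i)$ with $\mu_i:=H_im$ and $v_i:=H_i\Sigma H_i^{T}+R_i=\sum_{j=1}^{p}H_{ij}^{2}\sigma_j^{2}+R_i$. Hence the Fisher information of $(m,\Sigma)=(m_1,\dots,m_p,\sigma_1^{2},\dots,\sigma_p^{2})$ is the sum over $i$ of the per-observation informations, and each of those is the Fisher information of a scalar Gaussian whose mean and variance both depend smoothly on the parameter vector.

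First I would write the per-observation log-likelihood $\ell_i=-\tfrac12\log(2\pi)-\tfrac12\log v_i-\tfrac{(Y_i-\mu_i)^2}{2v_i}$ and compute its score with respect to a generic parameter $\theta_a$, namely $\partial_a\ell_i=\tfrac{(Y_i-\mu_i)}{v_i}\,\partial_a\mu_i+\big(\tfrac{(Y_i-\mu_i)^2}{2v_i^{2}}-\tfrac{1}{2v_i}\big)\partial_a v_i$. Taking the expectation of the product of two such scores and using $\ee[Y_i-\mu_i]=0$, $\ee[(Y_i-\mu_i)^2]=v_i$, $\ee[(Y_i-\mu_i)^3]=0$ and $\ee[(Y_i-\mu_i)^4]=3v_i^{2}$, the cross term between the ``mean part'' and the ``variance part'' of the score vanishes, and one recovers the classical identity $I_{ab}=\sum_{i=1}^{n}\big(\tfrac{1}{v_i}\,\partial_a\mu_i\,\partial_b\mu_i+\tfrac{1}{2v_i^{2}}\,\partial_a v_i\,\partial_b v_i\big)$.

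It then remains to substitute the derivatives specific to this parametrization: $\partial\mu_i/\partial m_j=H_{ij}$ and $\partial\mu_i/\partial\sigma_j^{2}=0$, while $\partial v_i/\partial m_j=0$ and $\partial v_i/\partial\sigma_j^{2}=H_{ij}^{2}$. Since in every mixed second-derivative term one of the two factors is identically zero, the $(m,\sigma^{2})$ block of the information matrix is zero, which gives the announced block-diagonal structure; substituting into the two surviving blocks yields $I(m_j,m_k)=\sum_i H_{ij}H_{ik}/v_i$ and $I(\sigma_j^{2},\sigma_k^{2})=\tfrac12\sum_i H_{ij}^{2}H_{ik}^{2}/v_i^{2}$ with $v_i=H_i\Sigma H_i^{T}+R_i$, as claimed.

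There is no serious obstacle: the computation is routine once the score of a scalar Gaussian is at hand. The only point deserving care is the vanishing of the mean–variance cross-block, which hinges on the third central moment of a Gaussian being zero; alternatively one could argue that the complete-data log-likelihood separates additively into a term in $m$ given $\lambda$ and a term in $(m,\Sigma)$ through $\lambda$, but the direct moment computation on the marginal likelihood above is the cleanest route.
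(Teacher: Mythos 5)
Your proof is correct, and it reaches the stated formulas by an equivalent but slightly different route from the paper's. The paper (Appendix B) uses the negative expected Hessian form $I(\theta_a,\theta_b)=-\ee_Y[\partial^2 l/\partial\theta_a\partial\theta_b]$: it differentiates the marginal log-likelihood twice, then kills the random terms using only $\ee[Y_i-H_im]=0$ and $\ee[(Y_i-H_im)^2]=H_i\Sigma H_i^T+R_i$. You instead use the score-covariance form $I(\theta_a,\theta_b)=\ee[\partial_a\ell\,\partial_b\ell]$ together with the classical identity $I_{ab}=\sum_i\big(v_i^{-1}\partial_a\mu_i\,\partial_b\mu_i+\tfrac12 v_i^{-2}\partial_a v_i\,\partial_b v_i\big)$ for a Gaussian with parameter-dependent mean and variance; this is cleaner once the identity is granted, but establishing it requires the third and fourth central moments ($0$ and $3v_i^2$), whereas the Hessian route needs only the first two. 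Both arguments correctly locate the source of the block-diagonal structure: in your version it is that $\mu_i$ depends only on $m$ and $v_i$ only on $\Sigma$ (plus the absence of a $\partial_a\mu\,\partial_b v$ term in the identity, which traces back to the vanishing third moment); in the paper's version it is that the mixed second derivatives are odd in $Y_i-H_im$ and hence have zero expectation. The substitutions $\partial\mu_i/\partial m_j=H_{ij}$ and $\partial v_i/\partial\sigma_j^2=H_{ij}^2$ and the resulting two blocks match the paper exactly.
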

\begin{proof_}
See \ref{appendixB}, where the Fisher information matrix is also defined.
\end{proof_}
This proposition can be used to compute an approximation of the asymptotic variance of $\hat{m}$ and $\hat{\sigma}_j^2$ for $1\leq j\leq p$. Indeed, if the cross terms of each block of the Fisher information matrix are negligible, we have
\begin{equation}
\vv[\hat{m}_j]\approx I(m_j,m_j)^{-1}
\,\,\,\,\,\,\,\text{and}\,\,\,\,\,\,\
\vv[\hat{\sigma}^2_j]\approx I(\sigma_j^2,\sigma_j^2)^{-1}.
\end{equation}
Therefore, the higher $n$ or the higher the derivatives components $H_{ij}$ associated with $\Lambda_j$, the smaller the asymptotic variances of $(\hat{m}_j,\hat{\sigma}^2_j)$. An indicator of statistical identifiability of $\Lambda_j$ was introduced in \cite{Celeux10}. It is called NEC (Normalized Error Coefficient), defined as 
\begin{equation}
\label{def_nec}
\text{NEC}_j=\frac{\sqrt{\vv[\hat{m}_j]}}{\hat{\sigma}_j}\,\,\,\,\,\,\,\,1\leq j \leq p.
\end{equation}
The closer this ratio to $0$, the better $\Lambda_j$ is well-identified.

\medskip
In using CIRCE, the normality and linearity assumptions may be flawed. On the one hand, normality must be inspected through the standardized predicted residuals whose probability distribution must be as consistent as possible with that of the standard Gaussian. The standardized residuals, denoted by $e_i$, are equal to$:$
\begin{equation}
\label{residuals}
e_i=\frac{Y_i-H_i\hat{m}}{\sqrt{H_i\hat{\Sigma} H_i^T+R_i}}\,\,\,\,\,\,\,\,1\leq j \leq n.
\end{equation}
A Q-Q plot presenting the theoretical quantiles versus the empirical quantiles of the standardized residuals can
diagnose a possible disagreement to normality. This plot should be complemented by normality tests \cite{Razali11}. On the other hand, the linearity assumption must be inspected across the estimated $95\%$ prediction interval of $\Lambda_j$, equal to
\begin{equation}
\label{IF95_lin}
\mathrm{PI}_{95\%}(\Lambda_j) = [\hat{m}_j-1.96\hat{\sigma}_j,\hat{m}_j+1.96\hat{\sigma}_j]\,\,\,\,\,\,\,\,1\leq j \leq p.
\end{equation}

If $\hat{m}_j$ is not close to $\lambda_{nom,j}$ or $\hat{\sigma}_j$ is too large, the linear approximations and the true code responses can hardly match each other across the whole length of the interval in Eq. (\ref{IF95_lin}). In such a case, the CIRCE method must be re-run after linearizing anew the code responses at $\lambda_{nom,j}:=\hat{m}_j$. This is called iterative CIRCE. Also it is possible that the linear approximations are more accurate in $\log{\Lambda_j}$. The multiplicative factor $\Lambda_j$ thus becomes a log-Gaussian variable and Eq. (\ref{IF95_lin}) is replaced by
\begin{equation}
\label{IF95_exp}
\mathrm{PI}_{95\%}(\Lambda_j) = [\exp{(\hat{m}_j-1.96\hat{\sigma}_j)},\exp{(\hat{m}_j+1.96\hat{\sigma}_j)}].
\end{equation}

\section{Generalization of CIRCE to several groups of experiments}

\label{sec:ECME_extension}

Frequently an adequate experimental database is composed by several data points obtained from different experimental setups. The usual practice is to put all the data points together to form a single experimental
database from which the ECME algorithm is run. The problem is then to mask a possible difference in the magnitude of the uncertainty between the groups. To avoid this, we present an extension of the ECME algorithm where $q$ variance parameters (one for each group) and a single mean parameter (the same for every group) are estimated all together. By contrast with the unpooled model where the mean and variance for each group are estimated separately, the mean parameter is here kept constant between the $q$ groups to preserve the uniqueness of the best-estimate model. 


\subsection{The multi-group ECME algorithm}

From now on, let us denote the whole set of experimental data
\begin{equation}
Y:=(Y^1,\cdots,Y^s,\cdots,Y^q)^{T}\in\rr^{n}\,\,\,\,\,;\,\,\,\,\,1\leq s\leq q
\end{equation}
with $Y^s$ being the vector of experimental data corresponding to the $s$th group of experiments. Thus, $Y$ is now partitioned into $q$ groups $Y^s$ for $1\leq s\leq q$. Let $n_s $ be the size of $Y^s$. Then,
\begin{equation}
n=\sum_{s=1}^{q}n_s.
\end{equation}
We also denote by $i_s$ the index of the last data point of the $s$th group. By setting $i_0:=0$, then $n_s=i_s-i_{s−1}$ for $1\leq s\leq q$. The indexing of data points $Y_i$ composing $Y$ is thus comprised between $1=i_0+1$ and $n=i_q$.

\medskip
Let $\sigma_s^2=(\sigma^2_{s,1},\cdots,\sigma^2_{s,p})^{T}\in\rr^{p}$ be the variance parameter of unobserved realizations $\ms{\lambda}_i$ of the $s$th group for $i_{s-1}+1\leq i \leq i_s$. Thus, we have
\begin{equation}
\ms{\lambda}_i\thicksim\mathcal{N}\big(m,\Sigma_s:=\text{diag}(\sigma_s^2)\big).
\end{equation}
The likelihood can now be written as$:$
\begin{equation}
\label{likelihood_m_groups}
L(Y|m,\Sigma_1,\cdots,\Sigma_q)=\prod_{s=1}^{q}\prod_{i=i_{s-1}+1}^{i_s}\frac{1}{\sqrt{2\pi(H_i\Sigma_s H_i^{T}+R_i)}}\exp{\Big(-\frac{1}{2}\frac{(Y_i-H_im)^2}{H_i\Sigma_s H_i^{T}+R_i}\Big)}.
\end{equation}
The next proposition presents the ECME algorithm extended to this multi-group likelihood.
\begin{proposition}
\label{ECMEbygroups}
The two steps of conditional maximization CM1 and CM2 lead to the following
updating formulas$:$
\begin{align*}
    1)& \quad (\sigma^2_{s,j})^{k+1}=(\sigma^2_{s,j})^{k} + \frac{1}{n_s}\,\sum_{i={i_{s-1}}+1}^{i_s} \left( \left(B_{ij}^k (V_i^k)^{-1}A_i^k\right)^2 - (B_{ij}^k)^2 (V_i^k)^{-1}\right),\\
    & \quad \quad \quad \quad \quad \quad \quad \forall j=1,\dots,p,  \,\,\forall s=1,\dots,q;\\
    2)& \quad m^{k+1}= \left( \sum_{i=1}^n H_i^T (V_i^{k+1})^{-1} H_i\right)^{-1} \left( \sum_{i=1}^n H_i^T (V_i^{k+1})^{-1} Y_i \right).
\end{align*}
where the scalar quantities $A_i^k$, $B_{ij}^k$ and $V_i^k$ are now defined respectively as $Y_i-H_im^k$, $(\sigma_{s,j}^2)^{k}H_{ij}$ and $H_i\Sigma_s^{k}H_i^{T}+R_i$ for $i_{s-1}+1\leq i\leq i_s$.
\end{proposition}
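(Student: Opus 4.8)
The plan is to adapt, essentially verbatim, the argument behind Proposition~\ref{ECME} (see \ref{appendixA}); the only new ingredient is that the variance part of the complete likelihood now breaks into $q$ group-specific blocks while the mean $m$ stays shared. I would begin from the complete likelihood $L(Z\mid m,\Sigma_1,\dots,\Sigma_q)=L(Y\mid\lambda,m,\Sigma_1,\dots,\Sigma_q)\,L(\lambda\mid m,\Sigma_1,\dots,\Sigma_q)$, observing that the first factor is free of the $\Sigma_s$ and that, the $\Sigma_s$ being diagonal, the log of the second factor reads
\[
l(\lambda\mid m,\Sigma_1,\dots,\Sigma_q)=-\tfrac12\sum_{s=1}^{q}\sum_{i=i_{s-1}+1}^{i_s}\sum_{j=1}^{p}\Big[\log(2\pi\sigma_{s,j}^2)+\frac{(\lambda_{ij}-m_j)^2}{\sigma_{s,j}^2}\Big].
\]
Hence the E-step quantity $Q$ separates over the pairs $(s,j)$: the term containing $\sigma_{s,j}^2$ only involves the data points of group $s$, and the same linearized model as in Section~\ref{standard_ecme} makes the $\ms{\lambda}_i$ conditionally independent across $i$ given $Y$.

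For the E-step I would compute, for $i$ in group $s$, the conditional law of $\ms{\lambda}_i$ given $(Y_i,m^k,\Sigma_s^k)$. Under (\ref{eq_circe_linearized}) with $\ms{\lambda}_i\sim\mathcal N(m^k,\Sigma_s^k)$ and $\epsilon_i\sim\mathcal N(0,R_i)$, the pair $(Y_i,\ms{\lambda}_i)$ is jointly Gaussian, so the usual conditioning formulas give
\[
\ee[\ms{\lambda}_i\mid Y_i]=m^k+\Sigma_s^k H_i^T (V_i^k)^{-1}(Y_i-H_i m^k),\qquad \vv[\ms{\lambda}_i\mid Y_i]=\Sigma_s^k-\Sigma_s^k H_i^T (V_i^k)^{-1} H_i\Sigma_s^k,
\]
with $V_i^k=H_i\Sigma_s^k H_i^T+R_i$. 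Extracting the $j$th coordinate and using that $\Sigma_s^k$ is diagonal yields $\ee[\lambda_{ij}\mid Y_i]-m_j^k=B_{ij}^k (V_i^k)^{-1} A_i^k$ and $\vv[\lambda_{ij}\mid Y_i]=(\sigma_{s,j}^2)^k-(B_{ij}^k)^2 (V_i^k)^{-1}$. The CM1 step maximizes $Q$ over each $\sigma_{s,j}^2$ with $m$ held at $m^k$; the first-order condition gives $\sigma_{s,j}^2=\frac1{n_s}\sum_{i=i_{s-1}+1}^{i_s}\ee[(\lambda_{ij}-m_j^k)^2\mid Y_i]$, and substituting $\ee[(\lambda_{ij}-m_j^k)^2\mid Y_i]=\vv[\lambda_{ij}\mid Y_i]+(\ee[\lambda_{ij}\mid Y_i]-m_j^k)^2$ together with the two identities above produces the stated CM1 formula after rearrangement (using that the sum of the constant term $(\sigma_{s,j}^2)^k$ over the $n_s$ indices of group $s$ divided by $n_s$ returns $(\sigma_{s,j}^2)^k$).

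For CM2 I would drop the complete likelihood and work with the observed multi-group likelihood (\ref{likelihood_m_groups}) evaluated at the freshly updated variances, $l(Y\mid m,\Sigma_1^{k+1},\dots,\Sigma_q^{k+1})$. Writing $s(i)$ for the group of index $i$ and $V_i^{k+1}=H_i\Sigma_{s(i)}^{k+1}H_i^T+R_i$, this log-likelihood is, up to an additive constant in $m$, the quadratic form $-\tfrac12\sum_{i=1}^{n}(Y_i-H_i m)^2/V_i^{k+1}$; setting its gradient in $m$ to zero gives the weighted normal equations $\sum_{i=1}^n H_i^T (V_i^{k+1})^{-1}(Y_i-H_i m)=0$, and since $H$ has full rank the weighted Gram matrix $\sum_i H_i^T (V_i^{k+1})^{-1} H_i$ is invertible, which yields the CM2 update.

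I do not anticipate a genuine difficulty: the proof is a bookkeeping extension of \ref{appendixA}. The only point deserving explicit attention is the interplay between the two CM steps and the group structure — namely that CM1 decouples across groups (each $\sigma_{s,j}^2$ update sees only group $s$, with normalization $1/n_s$), whereas CM2 recombines all $n$ points because the single mean $m$ is common to every group — which is precisely where the design choice ``group-specific variances, shared mean'' is reflected in the formulas.
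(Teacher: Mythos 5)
Your proposal is correct and follows essentially the same route as the paper's proof in \ref{appendixC}: Gaussian conditioning of $(\,Y_i,\lambda_{ij})$ for the E-step, a CM1 maximization that decouples over groups with normalization $1/n_s$, and a weighted least-squares solution of the observed multi-group likelihood for CM2. The only cosmetic difference is that the paper phrases the CM1 first-order condition through the exponential-family canonical parameter $\psi_s=(\sigma_s^2)^{-1}$ and the identity $\ee_{\delta}(t(\delta)\mid Y,\psi^k)=\partial\ln a/\partial\psi$, whereas you differentiate directly in $\sigma_{s,j}^2$; both yield the same stationarity equation and the same update.
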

\begin{proof_}
See \ref{appendixC}.
\end{proof_}
Once the algorithm has converged to $\hat{m}$ and $\hat{\sigma}_s^{2}$ ($1\leq s\leq q)$, the identifiability of the estimation for each group must be checked using the NEC indicator presented in Eq. (\ref{def_nec}). In the multi-group context, this indicator is written for the $s$th group of the $j$th factor as 
\begin{equation}
\label{nec_multi_group}
\text{NEC}_{s,j}=\frac{\sqrt{\vv[\hat{m}_j]}}{\hat{\sigma}_{s,j}}\,\,\,\,\,\,\,\,\,1\leq s\leq q,\,\,\,\,\,1\leq j\leq p.
\end{equation}
The components of the block diagonal Fisher information matrix needed to calculate it are now equal to$:$ 
\begin{equation}
\label{fisher_m_multi}
I(m_j,m_k)=\sum_{s=1}^{q}\sum_{i=i_{s-1}+1}^{i_s}\frac{H_{ij}H_{ik}}{H_i\Sigma_s H_i^T+R_i}
\end{equation}
and
\begin{equation}
\label{fisher_var_multi}
I(\sigma^2_{s,j},\sigma^2_{s,k})=\frac{1}{2}\sum_{i=i_{s-1}+1}^{i_s}\frac{H^2_{ij}H^2_{ik}}{(H_i\Sigma_s H_i^T+R_i)^2}.
\end{equation}
The proof is similar to that done for the regular CIRCE.

\subsection{Statistical significance of the variance difference between groups}

A testing procedure can be implemented to
assess the degree of statistical evidence that the variances of the groups are different to one another. To this
end, we can perform a Wald test \cite{Ward2018} applied to the following
null hypothesis$:$
\begin{equation}
\mathcal{H}_0\,:\, \sigma^2_s-\sigma^2_{s^{\prime}}=0,\,\,\,\,1\leq s\neq s^{\prime}\leq q.
\end{equation}
The Wald statistic is written as$:$
\begin{equation}
\label{WaldStat}
W=\frac{(\hat{\sigma}_s^2-\hat{\sigma}^2_{s^{\prime}})^2}{\vv[\hat{\sigma}^2_s]+\vv[\hat{\sigma}^2_{s^{\prime}}]-2\text{Cov}(\hat{\sigma}^2_s,\hat{\sigma}^2_{s^{\prime}})}\thicksim \chi^{2}(1)\,\,\,\text{under}\,\,\,\mathcal{H}_0,
\end{equation}
with $\chi^{2}(1)$ denoting the chi-square distribution with one degree of freedom.
The variance terms in the denominator of Eq. (\ref{WaldStat}) can be obtained from the coefficients of the Fisher information matrix. The test can be applied to each pair of indices $1\leq s\neq s^{\prime}\leq q$.

\medskip
If the test is not rejected at, say, the usual $5\%$ significance level, then it does not necessarily mean that the single variance model is more appropriate than the multi-group model. Another way to decide between the two models is to deal with model selection. In a MLE framework, two models can be compared to each other through the Akaike information criterion (AIC) \cite{AIC1973,AIC2019}, which is a comparison of the likelihood values of each model evaluated at their MLE including a penalty linked to the number of estimated parameters denoted by $n_{\theta}$$:$
\begin{equation}
AIC=2n_{\theta}-2l(Y|\hat{\theta}).
\end{equation}
Here for the regular CIRCE model, $n_{\theta}=2p$, while for the multi-group one, $n_{\theta}=(q+1)p$. The lower the AIC criterion, the better the model quality.

\section{Numerical academic examples}

\label{sec:ex}

\subsection{A first demonstration example}

We present a one-dimensional academic application of the multi-group CIRCE. First, we assume that for $1\leq i\leq n=100$,
\begin{equation}
\label{eqCIRCE_demo}
Y_i=H_i\lambda_i+\epsilon_i
\end{equation}
with
$\lambda_i\thicksim \mathcal{N}(m=1,\sigma^2=0.04)$ and $\epsilon_i\thicksim\mathcal{N}(0,R_i=0.01H_i)$. Hence,
\begin{equation}
Y_i\thicksim \mathcal{N}(H_i,0.04H_i^2+0.01H_i).
\end{equation}
If $H_i$ (and thus $Y_i$) ranges several orders of magnitude, then the distribution of $Y_i$ scales up accordingly (see Figure \ref{circe_reg_vs_circe_2groups} on the left). By contrast, the assumption of the multi-group CIRCE is that several scaling regimes (or groups) exist. An illustration of this with two groups is presented in Figure \ref{circe_reg_vs_circe_2groups} on the right where $\sigma^2$ is tripled from Group 1 to Group 2. Thus we have
\begin{equation}
\label{group1}
Y_i\thicksim \mathcal{N}(H_i,0.04H_i^2+0.01H_i)\,\,\,\,\text{if}\,\,\,\,H_i<10
\end{equation}
\begin{equation}
\label{group2}
Y_i\thicksim \mathcal{N}(H_i,0.12H_i^2+0.01H_i)\,\,\,\,\text{if}\,\,\,\,H_i\geq 10.
\end{equation}

\begin{figure}
\centering
\includegraphics[width=7.2cm]{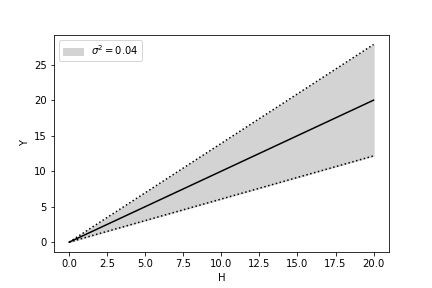}
\includegraphics[width=7.2cm]{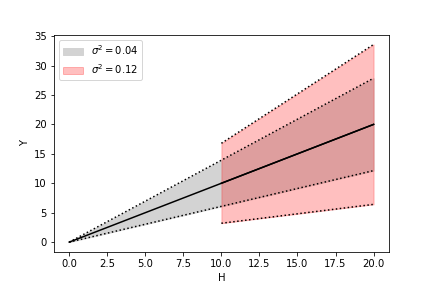}
\caption{Left$:$ Scaling induced by the regular CIRCE $(\sigma^2=0.04)$. Right$:$ Scaling has two regimes$:$ $\sigma^2=0.04$ when $H<10$ and $\sigma^2=0.12$ when $H\geq 10$. The dotted lines delimit the $95\%$ uncertainty.}
\label{circe_reg_vs_circe_2groups}
\end{figure}

\noindent

We have chosen arbitrarily $40$ data points for Group 1 and $60$ data points for Group 2. The two groups are thus made up of $Y_1,\cdots,Y_{40}$ and $Y_{41},\cdots,Y_{100}$ sampled from Eq. (\ref{group1}) and Eq. (\ref{group2}), respectively. The ECME algorithm presented in Section \ref{standard_ecme} was run using all the data points $Y_1,\cdots,Y_{100}$, returning the following MLEs
\begin{equation}
\hat{m}\approx 1.012\,\,\,\,\,\,\,\text{and}\,\,\,\,\,\,\,\hat{\sigma}^2\approx 0.078.
\end{equation}
Figure \ref{circe_reg_vs_circe_2groups_estim} on the left presents the resulting $95\%$ uncertainty interval for $Y$. We can count $0$ and $7$ points outside it for Group $1$ and $2$ respectively. The regular CIRCE thus underestimates the uncertainty of Group 2, and conversely for Group $1$. 

To fix this, the multi-group ECME was run. The MLEs obtained are equal to 
\begin{equation}
\hat{m}\approx 1.002
\end{equation}
and
\begin{equation}
\hat{\sigma}^2_1\approx 0.031\,\,\,\,\,\,\,\, \hat{\sigma}^2_2\approx 0.107.
\end{equation}
These variance estimates are coherent with the true values $(\sigma^2_1,\sigma^2_2)=(0.04,0.12)$. Figure \ref{circe_reg_vs_circe_2groups_estim} on the right presents the corresponding $95\%$ uncertainty interval for $Y$. We can now count $2$ points outside for each group, which is more consistent with the expected $95\%$ coverage frequency. The two scaling regimes have been reconstructed well. The Wald statistic given in Eq. (\ref{WaldStat}) is equal to $W=12.99$ which, as expected, largely exceeds the $5\%$ rejection threshold ($W=3.84$).

\begin{figure}
\centering
\includegraphics[width=7.2cm]{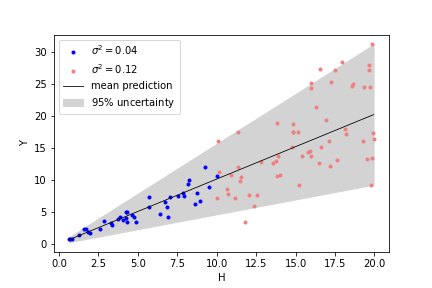}
\includegraphics[width=7.2cm]{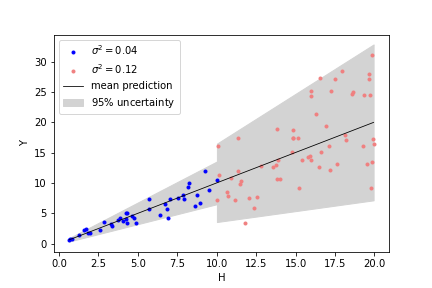}
\caption{Left: estimation by the regular ECME algorithm. Right: estimation by the multi-group ECME algorithm.}
\label{circe_reg_vs_circe_2groups_estim}
\end{figure}


\subsection{An example in higher dimension}

We simulated a vector $Y$ of artificial data according to Eq. (\ref{eqCIRCE_demo}) where $\lambda_i$ is now a $3$-dimensional vector $(p=3)$. We considered three groups with same size $\tilde{n}:=n_1=n_2=n_3$, such that
\begin{equation}
Y=(Y^1,Y^2,Y^3)^{T} \in\rr^{n}, \,\,\,\,n=3\tilde{n}.
\end{equation}
with $Y^s\in\rr^{\tilde{n}}$ ($s=1,2,3$). The data $Y$ were simulated with a matrix $H\in M_{n,3}(\rr)$ whose elements of each column $H_{\star j}$ ($j=1,2,3$) were sampled from the following uniform distributions$:$
\begin{equation}
H_{i 1}\thicksim \mathcal{U}(60,90),\,\,\,\,\,\,
H_{i 2}\thicksim \mathcal{U}(40,70),\,\,\,\,\,\,
H_{i 3}\thicksim \mathcal{U}(20,50),\,\,\,\,\,\,\,1\leq i\leq n.
\end{equation}
The parameters of the three factors for each group were set as follows$:$
\begin{equation}
\label{trus_means_3d}
(m_1,m_2,m_3)=(1,2,4)
\end{equation}
and $\sigma_{s,1}^2=\sigma_{s,2}^2=\sigma_{s,3}^2$
\begin{equation}
\label{true_variances_3d}
=\left\{ 
\begin{array}{ll}
0.9 & s=1 \\
0.3 & s=2 \\
0.6 & s=3 
\end{array}
\right.
\end{equation}
Lastly, the noise variances $R_i$ were all zero.

\medskip
From the simulated data $Y$, both the accuracy and precision of MLE computed by the multi-group ECME algorithm were assessed for increasing sizes of $\tilde{n}$$:$
\begin{equation}
\tilde{n}=125,250,500,1000.
\end{equation}
For each size, $500$ replications were carried out involving the random generation by Eq. (\ref{eqCIRCE_demo}) of $500$ different data sets $Y$. Each replication yielded a MLE $\hat{\theta}$ computed by the multi-group ECME. Let us recall that, as we have considered three factors and three groups for each factor, $\hat{\theta}$ consists of three mean estimates $\hat{m}_j$ ($j=1,2,3$) and three variance estimates $\hat{\sigma}^2_{s,j}$ for each group ($s=1,2,3$). Figures \ref{higher_dimension_violin_mean} and \ref{higher_dimension_violin_sigma2} present violin plots comparing the empirical distributions of the mean and variance estimates respectively. These distributions appear to be close in average to the true values of parameters given by Eqs. (\ref{trus_means_3d}) and (\ref{true_variances_3d}), while the variances computed over the $500$ replications fall with increasing $\tilde{n}$. For each of the three groups ($s=1,2,3$), we can also see in Figure \ref{higher_dimension_violin_sigma2} that the variance component $\sigma^2_{s,j}$ is more precisely estimated for the first factor $j=1$ (followed by the second and third factors $j=2,3$). This was expected because the larger $H_{\star j}$ the better the MLE precision, which is supported by the expression of the Fisher information matrix in the asymptotic regime (see Eqs. (\ref{fisher_m_multi}) and (\ref{fisher_var_multi})).
 
Note that in Figure \ref{higher_dimension_violin_sigma2}, a small portion of the variance estimates are negative. This is a well-known difficulty of mixed effect models in statistics (see for instance \cite{Choi2020} and reference therein). The multi-group CIRCE belonging to this class of models, it is thus affected by this difficulty in this demonstration example. In practice, of course, negative variance estimates should be replaced by zero. More elaborated methods to guarantee non-negative variance estimates exist, again see \cite{Choi2020} and references therein. Nevertheless, these methods are not the object of our work. In particular in the real data considered in Section \ref{sec:app}, all variance estimates are non-negative.

Then, Figure \ref{nec} presents the NEC identifiability indicators given by Eq. (\ref{nec_multi_group}) where $\vv[\hat{m}_j]$ and $\hat{\sigma}_{s,j}$ have been computed over the $500$ replications\footnote{$\hat{\sigma}_{s,j}$ is taken as the expectation of the $500$ corresponding estimates.}. Not surprisingly, every NEC decreases with increasing $\tilde{n}$. Consistently with Figure \ref{higher_dimension_violin_sigma2}, the $\text{NEC}_{s,1}$ indicator is the lowest at fixed $\tilde{n}$ followed by the $\text{NEC}_{s,2}$ and $\text{NEC}_{s,3}$ (for each $s=1,2,3$).  

Although $\tilde{n}=125$ would already be considered as a large database by thermal-hydraulic experts, the associated NEC indicators vary between $0.3$ and $0.83$, which is not indicative of strong identifiability (the closer the NEC to $0$, the stronger the identifiability). This is due to the fact that several factors are estimated together in the inverse problem, which limits the joint statistical identifiability of all the factors involved \cite{Cocci2022b}.

\begin{figure}
\centering
\includegraphics[width=7.2cm]{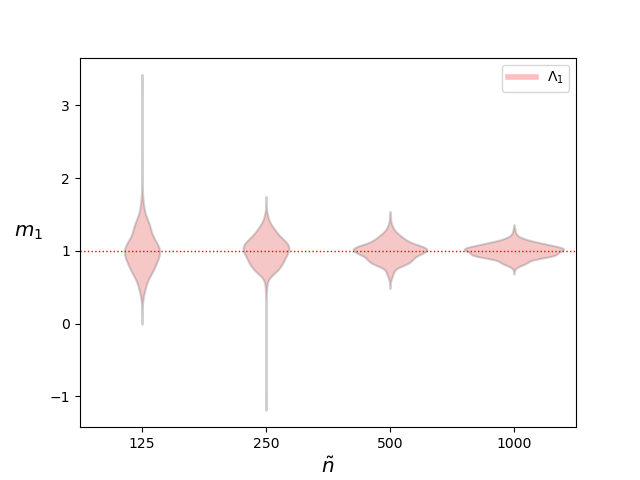}
\includegraphics[width=7.2cm]{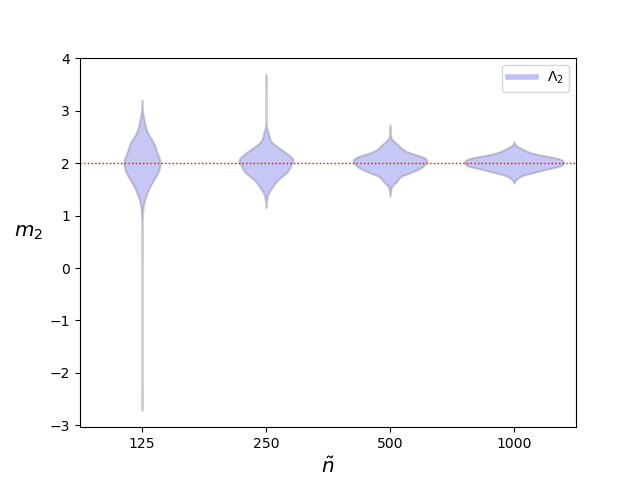}
\includegraphics[width=7.2cm]{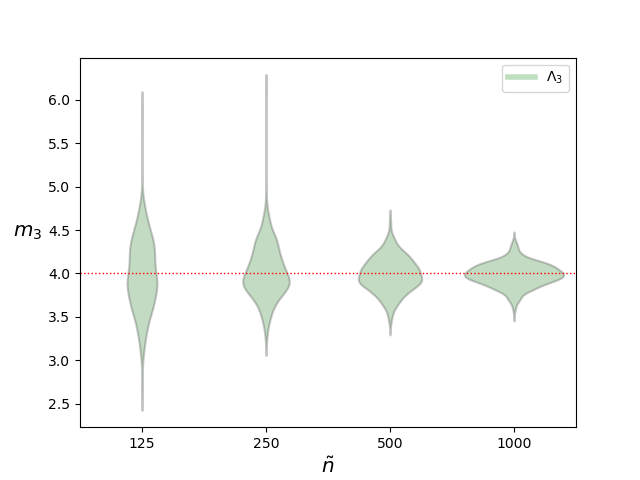}
\caption{Upper left (resp. upper right, bottom)$:$ violin plots of $\hat{m}_1$ (resp. $\hat{m}_2$, $\hat{m}_3$) constructed over $500$ replications. The dotted red line on each plot is the true value of the corresponding mean parameter.}
\label{higher_dimension_violin_mean}
\end{figure}

\begin{figure}
\centering
\includegraphics[width=7.2cm]{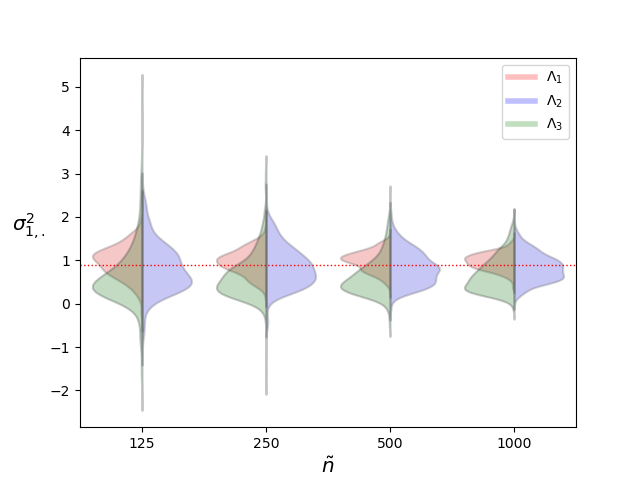}
\includegraphics[width=7.2cm]{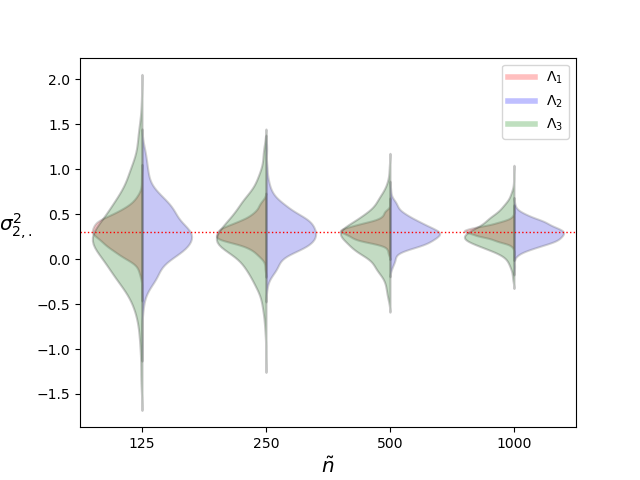}
\includegraphics[width=7.2cm]{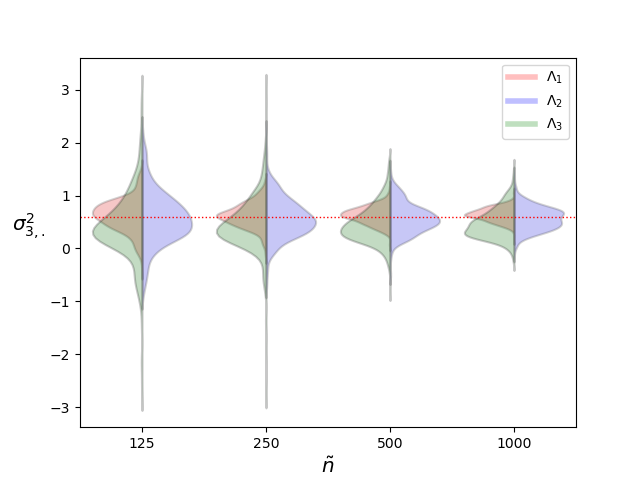}
\caption{Upper left$:$ violin plots of the variance estimates $\hat{\sigma}^2_{1,j}$ of the first group for each factor $\Lambda_j$ constructed over $500$ replications. Upper right (resp. bottom)$:$ Same plot for the variance estimate $\hat{\sigma}^2_{2,j}$ (resp. $\hat{\sigma}^2_{3,j}$). The dotted red line on each plot is the true value of the corresponding variance parameter.}
\label{higher_dimension_violin_sigma2}
\end{figure}

\begin{figure}
\centering
\includegraphics[width=8.2cm]{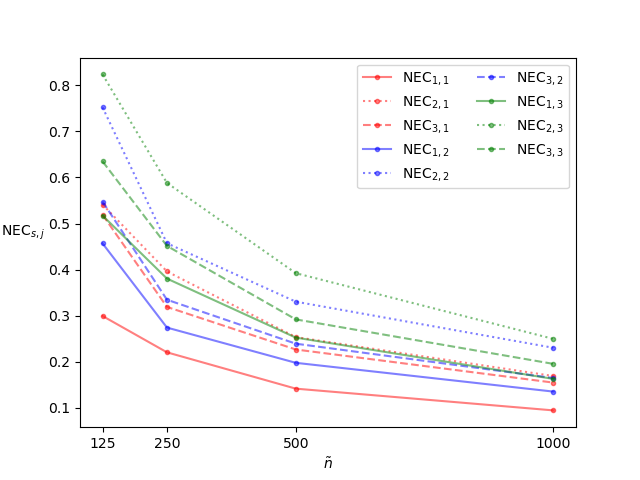}
\caption{Decrease in $\text{NEC}_{s,j}$ on the $y$-axis against the group size $\tilde{n}$ on the $x$-axis.}
\label{nec}
\end{figure}

\medskip
In the next section, we apply the multi-group CIRCE method for uncertainty quantification of the critical mass flow.

\section{Application to the critical mass flowrate}

\label{sec:app}

\subsection{Description of the phenomenon}

The critical mass flow rate is of primary importance to simulate the loss of coolant accident
(LOCA) in a pressurized water reactor (PWR) \cite{Pawluczyk16}. In such an accident, the entire system is depressurized
and the mass inventory in the core decreases because of a discharge of coolant flow
from higher to lower pressure at the break section. As the mass flow rate is independent
of the downstream pressure, its maximum value called critical mass flow (or chocked flow) is
reached. A good prediction of the critical mass flow is thus needed to well-predict the time evolution of the mass inventory in the core.

The study of critical mass flow
is commonly conducted on SETs for which pressures, temperatures and critical mass flow rates can be measured. The experiment of interest comes from the test facility named BETHSY Nozzle, which mimics the break occurring in a LOCA via a convergent test section. Figure \ref{bethsy_facilities} presents the two test sections, that we now refer to B2 and B6 respectively. After a careful inspection of the whole BETHSY database \cite{Sargentini19Nureth}, the adequate experiments picked for uncertainty quantification consist of $n_1=25$ tests from the 2 inches section and $n_2=24$ from the 6 inches section. Table \ref{THranges} presents the physical ranges of input conditions.


\begin{figure}
\centering
\includegraphics[width=9.2cm]{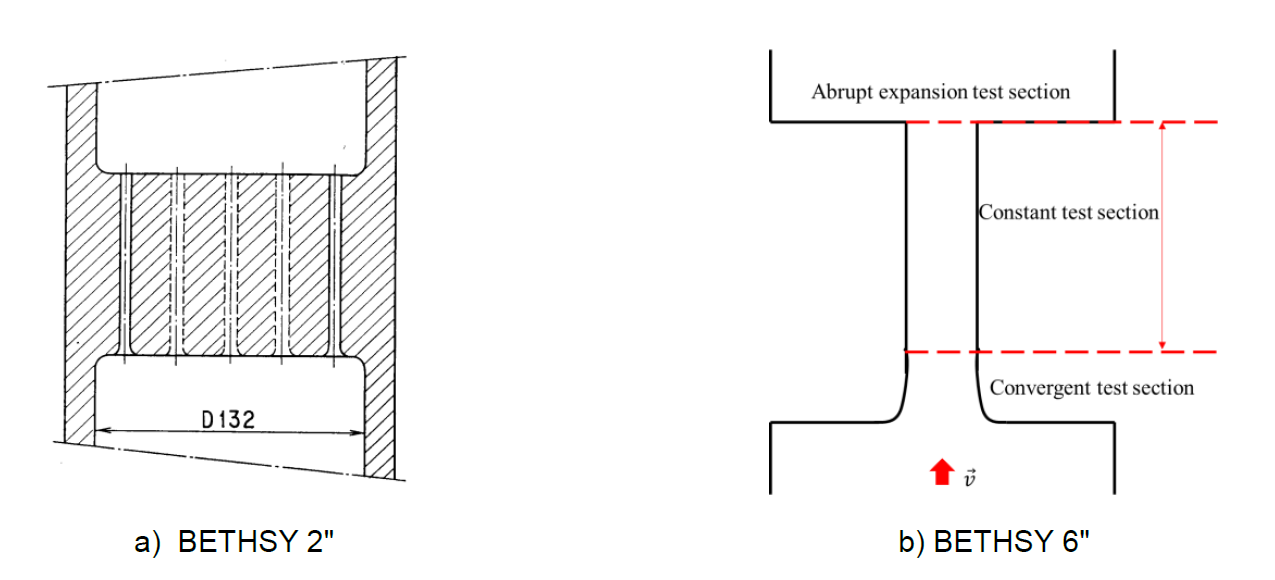}
\caption{Diagram of the two test sections of the BETHSY experiment.}
\label{bethsy_facilities}
\end{figure}

The TH simulations of the B2 and B6 tests are carried out with the CATHARE 2 system code \cite{Geffraye11}. Five closure models of this code may impact the numerical prediction of the critical mass flow rates, including both interfacial and steam-to-liquid friction factors as well as flashing. Recent work has shown that the latter phenomenon is predominant in the code responses accuracy \cite{Sargentini19Nureth}. Thus we neglect the role played by the other
phenomena and then concentrate on uncertainty of the flashing model.

In the next, we will evaluate whether the uncertainty of the flashing model is influenced by the geometry (scaling of the test section). To do so, the results of the regular CIRCE method will be compared to those of the multi-group generalization.

\begin{table}
\centering
\begin{tabular}{|C{4cm}||C{4cm}|}
\hline Pressure (bar) & $30$; $70$; $100$  \\
\hline Temperature ($^{\circ}$C) & $[203, 311]$ \\
\hline
\end{tabular}
\caption{Input conditions ranges of the BETHSY experiments.}
\label{THranges}
\end{table}

\subsection{Application of the regular CIRCE}

Regular CIRCE was first applied with all the $n=49$ mass flow measurements regardless of whether a test is B2 or B6. The best accuracy of the linear approximations of the code responses was obtained in $\log{\Lambda}$. The resulting distribution of $\Lambda$ applied to the flashing model is thus log-Gaussian, parametrized by the mean $m$ and variance $\sigma^2$ of $\log{\Lambda}$. Assuming that the experimental uncertainty is negligible ($\epsilon_i=0$ for $1\leq i\leq 49)$, we have obtained the MLEs\footnote{As $p=1$ and $\epsilon_i=0$, the ECME algorithm is unnecessary to compute them (see the end of Section \ref{standard_ecme}).} below$:$
\begin{equation}
\label{regularCIRCE_results}
\hat{m}=0.68\,\,\,\,\,\,\,\text{and}\,\,\,\,\,\,\,\hat{\sigma}^2=0.21.
\end{equation}
The corresponding $95\%$ prediction interval given in Eq. (\ref{IF95_exp}) is equal to$:$
\begin{equation}
\label{IF95full}
\mathrm{PI}_{95\%}(\Lambda)=[0.81,4.83].
\end{equation}


\begin{figure}
\centering
\includegraphics[width=7cm]{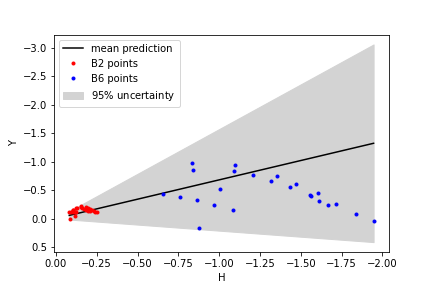}
\includegraphics[width=7cm]{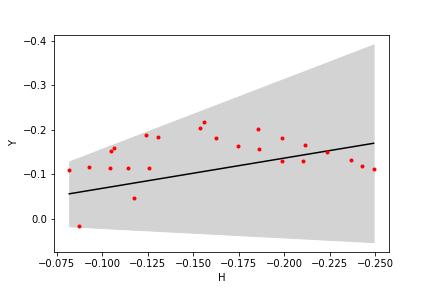}
\caption{Left$:$ Output uncertainty after propagating the regular CIRCE uncertainty. Right$:$ Zoom on the B2 points. The $y$-axis corresponds to the experimental mass flow rates shifted by the reference simulations run at $\lambda_{nom}=1$.}
\label{pred_regular_CIRCE}
\end{figure}
\noindent
The log-normal distribution with parameters given in Eq. (\ref{regularCIRCE_results}) can then be propagated to the model output by the derivative matrix $H$. Figure \ref{pred_regular_CIRCE} displays the resulting $95\%$ uncertainty ranges along with the B2 and B6 group of points. We can observe that the uncertainty is quite oversized for the B6 points colored in blue.
In Figure \ref{standardized_residuals_full}, we have also inspected the normality assumption underpinning CIRCE from the distribution of the standardized predicted residuals. When the residuals include all the B2 and B6 points together, the Kolmogorov-Smirnov (KS) normality test is not rejected at the level of $5\%$ ($p=0.92$). It is rejected if applied only to the B2 points though $(p=0.03)$. Same thing happens if applied only to the B6 points $(p=0.003)$. 

\medskip
The previous analysis shows us that each group of experiments may have its own range of uncertainty, which is overlooked by the regular CIRCE method. Therefore, the multi-group CIRCE with the B2 and B6 groups is carried out in the next section.

\begin{figure}[H]
\centering
\includegraphics[width=7cm]{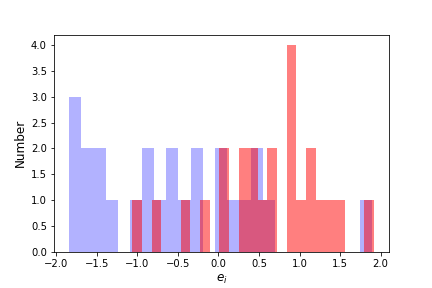}
\caption{Distribution of the standardized predicted residuals $\{e_i\}_{1\leq i\leq 49}$$:$ B2 (red bars) and B6 (blue bars).}
\label{standardized_residuals_full}
\end{figure}

\subsection{Application of the multi-group CIRCE on the B2 and B6 groups}

The multi-group ECME algorithm presented in Section \ref{sec:ECME_extension} has been run to the two groups of B2 and B6 points ($q=2$). The MLEs obtained are equal to
\begin{equation}
\hat{m}=0.57
\end{equation}
and
\begin{equation}
\hat{\sigma}^2_{B2}=0.31\,\,\,\,\,\,\,\, \hat{\sigma}^2_{B6}=0.13.
\end{equation}
The variance of the B2 group is larger than that of the B6 group. The NEC indicators for the first and second group are equal to $0.11$ and $0.17$ respectively. The corresponding $95\%$ prediction intervals are equal respectively to
\begin{equation}
\label{IF95_2p_newECME}
\mathrm{PI}_{95\%}(\Lambda_{B2})=[0.60,5.26]
\end{equation}
and
\begin{equation}
\label{IF95_6p_newECME}
\mathrm{PI}_{95\%}(\Lambda_{B6})=[0.88,3.57].
\end{equation}
The algorithm detects that the magnitude of uncertainty is not the same between the B2 and B6 experiments. The prediction intervals in Eqs. (\ref{IF95_2p_newECME}) and (\ref{IF95_6p_newECME}) are respectively larger and smaller than the one in Eq. (\ref{IF95full}) obtained with the regular CIRCE. Figure \ref{pred_multi_CIRCE} presents the corresponding uncertainty ranges for the critical mass flow rate predictions shifted by the reference calculations, which now scale differently for the two groups. The prediction uncertainty of the B6 group actually gets narrower than that of the regular CIRCE, and conversely for the B2 group.

Figure \ref{standardized_residuals_twogroups} then presents the standardized predicted residuals for each group in red against those coming from the regular CIRCE in blue (which were displayed in Figure \ref{standardized_residuals_full}). We can observe that the distribution of the B2 residuals is a bit less dispersed now, and conversely for the distribution of the B6 residuals. The $p$-values of the KS test for the B2 and B6 residuals are equal to $6\times 10^{-3}$ and $0.053$ respectively. At the level of $5\%$, the Gaussian assumption is still rejected for the B2 experiments, though.

\begin{figure}
\centering
\includegraphics[width=7cm]{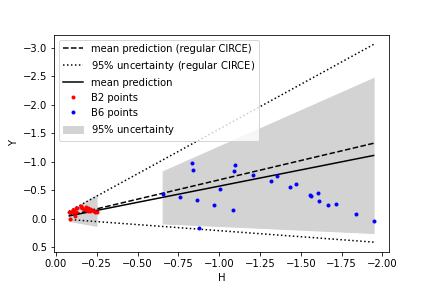}
\includegraphics[width=7cm]{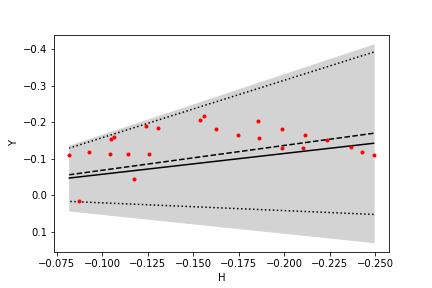}
\caption{Left$:$ Output uncertainty after propagating the multi-group CIRCE uncertainty. Right$:$ Zoom on the B2 points.}
\label{pred_multi_CIRCE}
\end{figure}

Finally, the Wald statistic $W$ presented in Section \ref{sec:ECME_extension} has been calculated to evaluate whether the difference between the two variance estimates is significant. The test relies on the asymptotic variances of $\hat{\sigma}^2_{B2}$ and $\hat{\sigma}^2_{B6}$ equal to the inverse of their Fisher information (see Section \ref{sec:ECME_extension}). We calculated $W=3.62$. As $\pp[\chi^{2}(1)\leq 3.84]=0.95$, the equality of variances is not rejected at the $5\%$ level (although it is close). The AIC criterion is equal to $-19.2$, which is slightly larger than that of the pooled CIRCE $(-19.4)$. From a statistical point of view, there is thus no reason to prefer using the unpooled model to the pooled model until more experimental data are collected to re-run the estimations. Nevertheless, since the uncertainties applied to closure relationships must be demonstrated to be conservative, the unpooled model is less likely to provide undersized uncertainty intervals.

\medskip
In this real application, we do emphasize that the assumption of a constant mean shared by the two groups is likely to be wrong. In fact, most of the B6 predicted residuals are below the mean prediction and conversely for the B2 predicted residuals. However, the multi-group CIRCE can, better than the regular CIRCE, work with the incorrectness of the common mean assumption that may be compensated by inflating the $\sigma^2$ differently for the two groups (which the regular CIRCE cannot do). 


\begin{figure}
\centering
\includegraphics[width=7cm]{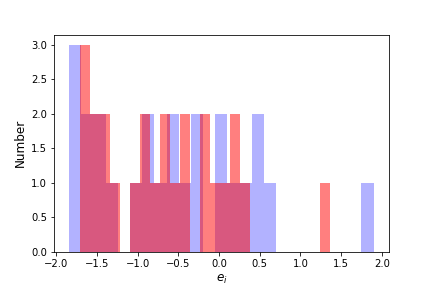}
\includegraphics[width=7cm]{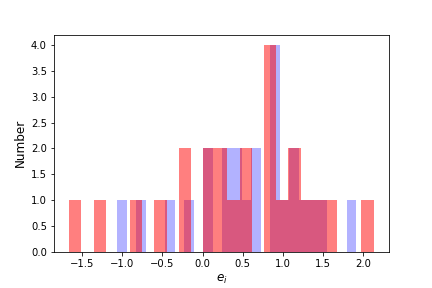}
\caption{Distribution of the standardized predicted residuals$:$ B2 (left) and B6 (right). The blue and red bars relate to regular and multi-group CIRCE, respectively.}
\label{standardized_residuals_twogroups}
\end{figure}

\section{Conclusion}

\label{sec:ccl}

CIRCE is a well-established statistical method based on discrepancies between TH simulations and counterpart real experiments that estimates the probability distributions of the multiplicative factors applied to closure models. The method, which thus falls within the class of Inverse Problems, assumes that the factors are (log-)Gaussian, then runs an ECME algorithm to compute the MLEs of the mean and variance parameters. In practice, most of the closure models integrated into thermal-hydraulic system codes are established from several adequate groups of experiments. The groups may be formed by Separate Effect Tests (SETs) that differ in scale (geometry, TH ranges, etc). Our work has then focused on assessing whether the parameters of the Gaussian factor(s) should depends on the group, or not. 


Therefore, we have developed a multi-group ECME algorithm which estimates a specific variance parameter for each group keeping the mean value across all groups constant. Such per-group algorithm is able to statistically determine either that the magnitude of uncertainty depends on the group of experiments under consideration, or conversely that it is homogeneous between the groups.

A limitation in applying the CIRCE method, even more pronounced with the multi-group ECME algorithm, is when the number of thermal-hydraulic experiments is small, i.e. between a few tens and two hundreds. As this algorithm splits the database into several groups of experiments, the uncertainty of the MLE is likely to increase. As a result, the Wald statistic for testing the equality of the variance parameters can hardly reject the null hypothesis.

Finally, propagating such kind of uncertainties in transient simulations (e.g. a reactor case) remains an open issue. Further investigations will be needed to assess the transition between two different groups and avoid any discontinuity.  

\section{Acknowledgments}
This work has mostly been conducted in 2021 as part of the exploratory project named ACIDITHY (AdequaCy Indices of experimental Database In ThermalHYdraulics), which was funded by the NEEDS program (\url{https://needs.in2p3.fr/}). Furthermore, we would like to greatly thank the two anonymous reviewers who have helped us to improve the paper.

\bibliographystyle{unsrt}
\bibliography{biblio}

\newpage
\appendix

\section{Proof of Proposition \ref{ECME}}

\label{appendixA}

Let us begin by defining $\ms{\delta}_i=(\ms{\lambda}_i-m) \in\rr^{p}$. Then,
\begin{equation}
Y_i=H_i(\ms{\delta}_i+m)+\epsilon_i
\end{equation}
with 
\begin{equation}
\ms{\delta}_i\thicksim\mathcal{N}(0,\Sigma=\diag{\sigma^2}).
\end{equation}
If we denote $\delta_{ij}$ the $j$th component of $\ms{\delta}_i\in\rr^{p}$, then
\begin{equation}
\ee\left( \begin{array}{c}
     Y_i \\
     \delta_{ij} 
\end{array} \right)=\left(\begin{array}{c}
H_{i}m \\
0
\end{array}\right)
\end{equation}
and for $1\leq j\leq p$
\begin{equation}
\vv\left( \begin{array}{c}
     Y_i \\
     \delta_{ij}
\end{array}  \right)=\left(\begin{array}{cc}
H_{i} \diag{\sigma^2} H_{i}^{T}+R & H_{ij} \sigma^2_j \\
\sigma^2_j H_{ij} & \sigma^2_j
\end{array}\right).
\end{equation}
Let us define
\begin{equation}
A_{i}=Y_{i}-H_{i}m ; \quad B_{ij}=\sigma^2_j H_{ij}; \quad V_{i}=H_{i} \Sigma H_{i}^{T}+R_i.
\end{equation}
With these notations, we have
\begin{align}
\nonumber
    \ee\left(\delta_{ij}\mid Y_{i}\right)=&  (\sigma^2_j H_{ij})(H_i \Sigma H_i^T + R_i)^{-1}(Y_i - H_i m) \\
    =& B_{ij} V_{i}^{-1} A_{i}
\label{espcond_delta0}
\end{align}
and
\begin{align}
\nonumber
\vv \left( \delta_{ij}  \mid Y_{i}\right)=& \sigma^2_j - (\sigma^2_j H_{ij}) (H_i \diag{\sigma^2} H_i^T + R_i)^{-1} (H_{ij} \sigma^2_j)\\
=&\sigma^2_j-  B_{ij}^2 V_i^{-1}.
\label{varcond_delta0}
\end{align}
Let $Z_i=(Y_i,\ms{\delta}_i)\in\rr^{p+1}$ be the union of the $i$th experimental response and latent model realization.
The E. step of the ECME algorithm is to derive the expectation of the complete log-likelihood with respect to the distribution of $\delta$ conditional on $(Y,\theta^k)$, equal to
\begin{equation}
\label{Qfunction0}
Q(\theta,\theta^k)=\ee_{\delta}[l(Z|\theta)|Y,\theta^k]
\end{equation}
with $\theta^k=(m^k,(\sigma^2)^{k})\in\rr^{p+1}$ and $\delta$ synthesizing the unobserved samples $\{\ms{\delta}_i\}_{1\leq i\leq n}$. The complete log-likelihood
is written as:
\begin{equation}
l(Z|\theta)=l(Y|\delta,\theta)+l(\delta|\theta).
\end{equation}
The first term is in fact independent of $\theta$ and thus maximizing Eq. (\ref{Qfunction0}) comes down to maximizing $\ee_{\delta}[l(\delta|\theta)|Y,\theta^k]$.
Defining the following canonical parameter $\psi=(\sigma^2)^{-1}\in\rr^{p}$, $l(\delta|\psi)$ can be written as
\begin{equation}
\label{logQpartial0}
l(\delta|\psi)=-\frac{np}{2}\ln{(2\pi)}+\frac{n}{2}\ln{|\diag{\psi}|} -\frac{1}{2}\sum_{i=1}^{n}  (\ms{\delta}_{i}^2)^{T}\psi.
\end{equation}
We can rewrite Eq. (\ref{logQpartial0}) in the following way
\begin{equation}
l(\delta|\psi)=\psi^T t(\delta) - \ln (a(\psi)) + \ln (b(\delta)) 
\end{equation}
with
\begin{itemize}

\item $t(\delta)=-1/2\sum_{i=1}^{n}  \ms{\delta}_i^2$ the sufficient statistics of $\lambda$;

\smallskip
\item $\ln(a(\psi))= -\frac{n}{2} \ln|\diag{\psi}|   $;

\smallskip
\item $\ln(b(\delta))=-\frac{np}{2}\ln(2\pi)$ depending on the length of $\delta$. 
\end{itemize} 

\medskip
Considering the canonical parameter instead of $\theta$, we have (up to a constant)
\begin{align}
\nonumber
Q(\psi,\psi^k)= &\,\ee_{\delta}[l(\delta|\psi)|Y,\psi^k]\\
\nonumber
=& \,\ee_{\delta}(\psi^T t(\delta)|Y,\psi^k) - \ee_{\delta}(\ln(a(\psi))|Y,\psi^k) + \ee_{\delta}(\ln(b(\delta))|Y,\psi^k) \\
    =& \,\psi^T \ee_{\delta}(t(\delta)|Y,\psi^k) - \ln(a(\psi)) + \ln(b(\delta))
\label{Estep0}
\end{align}
where the $j$th component of $\ee_{\delta}(t(\delta)|Y,\psi^k)$ is equal to
\begin{equation}
 \ee\Big(-\frac{1}{2} \sum_{i=1}^{n} \delta^2_{ij} \mid Y, \theta^{k}\Big)=-\frac{1}{2} \sum_{i=1}^{n}\left[\ee\left(\delta_{ij} \mid Y, \theta^{k}\right)^2 +\vv\left(\delta_{ij} \mid Y, \theta^{k}\right)\right].
\end{equation} 
Using Eqs. (\ref{espcond_delta0}) and (\ref{varcond_delta0}) we thus have
\begin{multline}
\ee\Big(-\frac{1}{2} \sum_{i=1}^{n} \delta^2_{ij} \mid Y, \theta^{k}\Big) = \\
-\frac{1}{2}\Big(n (\sigma^2_j)^k + \sum_{i=1}^{n} \Big[(B_{ij}^k(V_i^k)^{-1} A_i^k)^2  - (B_{ij}^k)^2(V_i^k)^{-1}\Big]\Big).
\end{multline}
Then, the CM1 step is to find $\psi^{k+1}$ that maximizes (\ref{Estep0}).
\begin{align*}
    &\frac{\partial Q(\psi,\psi^k)}{\partial \psi}=0\\
    \Longleftrightarrow &\,\,\,\frac{\partial \ee_{\delta}[l(\delta|\psi)|Y,\psi^k])}{\partial \psi}=0\\
    \Longleftrightarrow &\,\,\, \frac{\partial [\psi^T \ee_{\delta}(t(\delta)|Y,\psi^k) - \ln(a(\psi)) + \ln(b(\delta))]}{\partial \psi}=0\\
    \Longleftrightarrow & \,\,\,\ee_{\delta}(t(\delta)|Y,\psi^k) - \frac{\partial\ln(a(\psi))}{\partial \psi}=0\\
    \Longleftrightarrow &\,\,\, \ee_{\delta}(t(\delta)|Y,\psi^k) = \frac{\partial \ln(a)}{\partial \psi}(\psi).
\end{align*}
It comes out that $\psi_{k+1}$ achieves the following equation
\begin{equation}
\ee_{\delta}(t(\delta)|Y,\psi^k) = \frac{\partial \ln(a)}{\partial \psi}(\psi_{k+1})
\end{equation}
where
\begin{align}
    \frac{\partial \ln(a)}{\partial \psi}(\psi)=&-\frac{n}{2} \psi^{-1}\\
    =& -\frac{n}{2} \sigma^2.
\end{align}
Finally, we can deduce the expression of each component of $\sigma^2=\psi^{-1}$ at the $k+1$th iteration of this ECME algorithm$:$
\begin{equation}
(\sigma_j^2)^{k+1}=(\sigma_j^2)^k + \frac{1}{n}\sum_{i=1}^{n} \left( \left(B_{ij}^k (V_i^k)^{-1}A_i^k\right)^2 - (B_{ij}^k)^2 (V_i^k)^{-1} \right).
\end{equation}
The CM2 step is then to compute $m_{k+1}$ that maximizes the incomplete likelihood conditional on $(\sigma_j^2)^{k+1}$. It leads to the weighted least squares estimator below
\begin{equation}
m^{k+1}= \left( \sum_{i=1}^n H_i^T (V_i^{k+1})^{-1} H_i\right)^{-1} \left( \sum_{i=1}^n H_i^T (V_i^{k+1})^{-1} Y_i \right).
\end{equation}

\section{The Fisher information matrix}

\label{appendixB}

The calculation of the Fisher information matrix, called here $I$, allows us to derive the asymptotic variances of $\hat{\theta}=(\hat{m},\hat{\sigma}^2)$. For $1\leq i,j\leq 2p$, then
\begin{equation}
\label{Fisher_def}
I(\theta_i,\theta_j)=-\ee_{Y}\Big[\frac{\partial^2}{\partial\theta_i\partial \theta_j}l(Y|\theta)|\theta\Big]
\end{equation}
where
\begin{equation}
l(Y|\theta)=-\frac{n}{2}\ln{(2\pi)}-\frac{1}{2}\sum_{i=1}^{n}\ln{(H_i\Sigma H_i^T+R_i)}-\frac{1}{2}\sum_{i=1}^{n}\frac{(Y_i-H_im)^2}{H_i\Sigma H_i^T+R_i}.
\end{equation}
The first derivatives with respect to $m_j$ and $\sigma_j^2$ are equal respectively for $1\leq j\leq p$ to$:$
\begin{equation}
\frac{\partial}{\partial m_j}l(Y|\theta)=\sum_{i=1}^{n}\frac{H_{ij}(Y_i-H_im)}{H_i\Sigma H_i^T+R_i}
\end{equation}
and 
\begin{equation}
\frac{\partial}{\partial \sigma^2_j}l(Y|\theta)=-\frac{1}{2}\sum_{i=1}^{n}\frac{H^2_{ij}}{(H_i\Sigma H_i^T+R_i)}-\frac{1}{2}\sum_{i=1}^{n}\frac{-H_{ij}^2(Y_i-H_im)^2}{(H_i\Sigma H_i^T+R_i)^2}.
\end{equation}
Then, we can calculate the second derivatives as
\begin{equation}
\label{mm}
\frac{\partial^2}{\partial m_j\partial m_k}l(Y|\theta)=-\sum_{i=1}^{n}\frac{H_{ij}H_{ik}}{H_i\Sigma H_i^T+R_i}
\end{equation}
and 
\begin{equation}
\label{sigma2sigma2}
\frac{\partial^2}{\partial\sigma^2_j \partial\sigma^2_k}l(Y|\theta)=\frac{1}{2}\sum_{i=1}^{n}\frac{H^2_{ij}H^2_{ik}}{(H_i\Sigma H_i^T+R_i)^2}-\sum_{i=1}^{n}\frac{H^2_{ij}H^2_{ik}(Y_i-H_im)^2}{(H_i\Sigma H_i^T+R_i)^3}.
\end{equation}
The cross derivatives between $m_k$ and $\sigma_j^2$ are equal to
\begin{equation}
\label{sigma2m}
\frac{\partial^2}{\partial \sigma^2_j\partial m_k}l(Y|\theta)=\sum_{i=1}^{n}\frac{-H_{ik}H_{ij}^2(Y_i-H_im)}{(H_i\Sigma H_i^T+R_i)^2}
\end{equation}
and
\begin{equation}
\label{msigma2}
\frac{\partial^2}{\partial m_k \partial \sigma^2_j}l(Y|\theta)=\sum_{i=1}^{n}\frac{-H^2_{ij}H_{ik}(Y_i-H_im)}{(H_i\Sigma H_i^T+R_i)^2}.
\end{equation}
As $\ee[Y_i-H_im]=0$ and $\ee[(Y_i-H_im)^2]=H_i\Sigma H_i^T+R_i$, the exact expression of every entry of the Fisher information matrix can be readily obtained$:$
\begin{equation}
\label{IFmoy}
I(m_j,m_k)=\sum_{i=1}^{n}\frac{H_{ij}H_{ik}}{H_i\Sigma H_i^T+R_i}
\end{equation}
\begin{equation}
\label{IFvar}
I(\sigma^2_j,\sigma^2_k)=\frac{1}{2}\sum_{i=1}^{n}\frac{H^2_{ij}H^2_{ik}}{(H_i\Sigma H_i^T+R_i)^2}
\end{equation}
and 
\begin{equation}
I(\sigma^2_j,m_k)=0\,\,\,\,\text{and}\,\,\,I(m_k,\sigma_j^2)=0.
\end{equation}

\section{Proof of Proposition \ref{ECMEbygroups}}

\label{appendixC}

The proof is similar to that of Proposition \ref{ECME}. We now have $q$ groups composing $Y$ where each of them has its own variance parameter, such that
\begin{equation}
Y:=(Y^1,\cdots,Y^s,\cdots,Y^q)^{T}\in\rr^{n}\,\,\,\,\,;\,\,\,\,\,1\leq s\leq q.
\end{equation}
with $Y^s$ the subset of $Y$ consisting of the experimental data of the $s$th group.
The way of indexing the $n$ components of $Y$ is explained in Section \ref{sec:ECME_extension} before the proposition statement.
Let us recall that$:$ 
\begin{itemize}
\item $\sigma_s^2=(\sigma^2_{s,1},\cdots,\sigma^2_{s,p})^{T}\in\rr^{p}$ is the vector of variance parameters shared by the unobserved realizations $\{\ms{\lambda}_i\}_{i_{s-1}+1\leq i\leq i_s}$ of the $s$th group,
\item $\Sigma_s$ is the diagonal matrix formed by $\sigma_s^2\in\rr^{p}$.
\end{itemize}
It follows that
\begin{equation}
\ms{\lambda}_i\thicksim\mathcal{N}(m,\Sigma_s)\,\,\,\,\,\,\,i_{s-1}+1\leq i\leq i_s.
\end{equation}
Also let us define $\ms{\delta}_i=\ms{\lambda}_i-m$. If $i_{s-1}<i\leq i_s$,
\begin{equation}
Y_i=H_i(\ms{\delta}_i+m)+\epsilon_i
\end{equation}
where
\begin{equation}
\ms{\delta}_i\thicksim\mathcal{N}(0,\Sigma_s),
\end{equation}
with $\ms{\delta}_i=(\delta_{i1},\cdots,\delta_{ip})^{T}\in\rr^{p}$. If $Y_i$ belongs to the $s$th class, we have for $1\leq j\leq p$
\begin{equation}
\ee\left( \begin{array}{c}
     Y_i \\
     \delta_{ij}
\end{array} \right)=\left(\begin{array}{c}
H_{i}m \\
0
\end{array}\right)
\end{equation}
and 
\begin{equation}
\vv\left( \begin{array}{c}
     Y_i \\
     \delta_{ij}
\end{array}  \right)=\left(\begin{array}{cc}
H_{i} \diag{\sigma^2_s} H_{i}^{T}+R_i & (H_{ij}) \sigma^2_{s,j} \\
\sigma^2_{s,j} H_{ij} & \sigma^2_{s,j}
\end{array}\right).
\end{equation}
Let us define
\begin{equation}
A_{i}=Y_{i}-H_{i}m ; \quad B_{ij}=\sigma^2_{s,j} H_{ij}; \quad V_{i}=H_{i} \Sigma_s H_{i}^{T}+R_i.
\end{equation}
With these notations, we have
\begin{align}
\nonumber
    \ee\left(\delta_{ij}\mid Y_{i}\right)=&  \sigma^2_{s,j} H_{ij}(H_i \diag{\sigma^2_s} H_i^T + R_i)^{-1}(Y_i - H_i m) \\
    =& B_{ij} V_{i}^{-1} A_{i}
\label{espcond_delta}
\end{align}
and
\begin{align}
\nonumber
\vv \left( \delta_{ij}  \mid Y_{i}\right)=& \sigma^2_{s,j} - (\sigma^2_{s,j} H_{ij}) (H_i \diag{\sigma^2_s} H_i^T + R_i)^{-1} (H_{ij} \sigma^2_{s,j})\\
=&\sigma^2_{s,j}-  B_{ij}^2 V_i^{-1}.
\label{varcond_delta}
\end{align}
Let $Z_i=(Y_i,\ms{\delta}_i)\in\rr^{p+1}$ be the union of the $i$th experimental data and unobserved model realization shifted by $m$.
The E. step of the ECME algorithm is to derive the expectation of the complete log-likelihood with respect to the distribution of $\delta$ conditional on $(Y,\theta^k)$, equal to
\begin{equation}
\label{Qfunction}
Q(\theta,\theta^k)=\ee_{\delta}[l(Z|\theta)|Y,\theta^k]
\end{equation}
with $\theta^k=(m^k,(\sigma_1^2)^k,\cdots,\cdots,(\sigma_q^2)^k)$ and $\delta$ synthesizing the unobserved samples $\{\ms{\delta}_i\}_{1\leq i\leq n}$. The complete log-likelihood
is written as:
\begin{equation}
l(Z|\theta)=l(Y|\delta,\theta)+l(\delta|\theta).
\end{equation}
The first term is in fact independent of $\theta$ and thus maximizing (\ref{Qfunction}) comes down to maximizing $\ee_{\delta}[l(\delta|\theta)|Y,\theta^k]$.
Defining the following canonical parameter $\psi^{T}=(\psi^T_1,\cdots,\psi^T_q)=((\sigma_1^2)^{-1},\cdots,(\sigma_q^2)^{-1})$ leads to 
\begin{multline}
\label{logQpartial}
l(\delta|\psi)=-\frac{np}{2}\ln(2\pi) + \frac{n_1}{2}\ln(|\Sigma_1|)+\dots + \frac{n_l}{2}\ln(|\Sigma_q|) \\ 
-\frac{1}{2} \sum_{i=1}^{i_1} (\ms{\delta}_i^2)^T \psi_1 - \dots -\frac{1}{2} \sum_{i=i_{q-1} +1 }^{i_q} (\ms{\delta}_i^2)^T \psi_q,
\end{multline}
with $\ms{\delta}_i^2:=(\delta_{i1}^2,\cdots,\delta_{ip}^2)\in\rr^{p}$. We can write Eq. (\ref{logQpartial}) in the following way
\begin{equation}
l(\delta|\psi)=\psi^T t(\delta) - \ln (a(\psi)) + \ln (b(\delta)) 
\end{equation}
with
\begin{itemize}
\item $\psi^T=(\psi^T_1,\dots, \psi^T_q)$ the canonical parameter$;$

\smallskip
\item $t(\delta)^{T}=\left( - (1/2)\sum_{i=1}^{i_1}  (\ms{\delta}_i^2)^T, \dots, - (1/2)\sum_{i=i_{q-1}+1}^{i_q} (\ms{\delta}_i^2)^T  \right)$ the sufficient statistics of $\lambda$;

\smallskip
\item$\ln(a(\psi))= -\left(\frac{n_1}{2} \ln|\text{diag}(\psi_1)| + \dots + \frac{n_q}{2} \ln|\text{diag}(\psi_q)|\right)  $;

\smallskip
\item $\ln(b(\delta))=-\frac{np}{2}\ln(2\pi)$ depending on the length of $\delta$. 
\end{itemize} 
Considering the canonical parameter instead of $\theta$, we have (up to a constant)
\begin{align}
\nonumber
Q(\psi,\psi^k)= &\,\ee_{\delta}[l(\delta|\psi)|Y,\psi^k]\\
\nonumber
=& \,\ee_{\delta}(\psi^T t(\delta)|Y,\psi^k) - \ee_{\delta}(\ln(a(\psi))|Y,\psi^k) + \ee_{\delta}(\ln(b(\delta))|Y,\psi^k) \\
    =& \,\psi^T \ee_{\delta}(t(\delta)|Y,\psi^k) - \ln(a(\psi)) + \ln(b(\delta))
\label{Estep}
\end{align}
where the $j$th component of $\ee_{\delta}(t(\delta)|Y,\psi^k)$ limited to the $s$th group is equal to
\begin{equation}
 \ee\Big(-\frac{1}{2} \sum_{i=i_{s-1}+1}^{i_s} \delta_{ij}^2 \mid Y, \theta^{k}\Big)=-\frac{1}{2} \sum_{i=i_{s-1}+1}^{i_s}\left[\ee\left(\delta_{ij} \mid Y, \theta^{k}\right)^2 +\vv\left(\delta_{ij} \mid Y, \theta^{k}\right)\right].
\end{equation} 
Using Eqs. (\ref{espcond_delta}) and (\ref{varcond_delta}) we thus have
\begin{multline}
\ee\Big(-\frac{1}{2} \sum_{i=i_{s-1}+1}^{i_s} \delta_{ij}^2 \mid Y, \theta^{k}\Big) = \\
-\frac{1}{2}\Big(n_s (\sigma_{s,j}^2)^k + \sum_{i=i_{s-1}+1}^{i_s} \Big[(B_{ij}^k(V_i^k)^{-1} A_i^k)^2  - (B_{ij}^k)^{2}(V_i^k)^{-1} \Big]\Big).
\end{multline}
Then, the CM1 step is to find $\psi^{k+1}$ that maximizes (\ref{Estep}).
\begin{align*}
    &\frac{\partial Q(\psi,\psi^k)}{\partial \psi}=0\\
    \Longleftrightarrow &\,\,\,\frac{\partial \ee_{\delta}[l(\delta|\psi)|Y,\psi^k])}{\partial \psi}=0\\
    \Longleftrightarrow &\,\,\, \frac{\partial [\psi^T \ee_{\delta}(t(\delta)|Y,\psi^k) - \ln(a(\psi)) + \ln(b(\delta))]}{\partial \psi}=0\\
    \Longleftrightarrow & \,\,\,\ee_{\delta}(t(\delta)|Y,\psi^k) - \frac{\ln(a(\psi))}{\partial \psi}=0\\
    \Longleftrightarrow &\,\,\, \ee_{\delta}(t(\delta)|Y,\psi^k) = \frac{\partial \ln(a)}{\partial \psi}(\psi).
\end{align*}
It comes out that $\psi_{k+1}$ achieves the following equation
\begin{equation}
\ee_{\delta}(t(\delta)|Y,\psi^k) = \frac{\partial \ln(a)}{\partial \psi}(\psi_{k+1}).
\end{equation}
For $1\leq s\leq q$, 
\begin{align}
    \frac{\partial \ln(a)}{\partial \psi_s}(\psi)=&-\frac{n_s}{2} \psi_s^{-1}\\
    =& -\frac{n_s}{2} \sigma^2_s.
\end{align}
Finally, we can deduce the expression of each $\sigma_s^2=\psi_s^{-1}$ at the $k+1$th iteration of this ECME algorithm. For $1\leq s\leq q$$:$
\begin{equation}
(\sigma^2_{s,j})^{k+1}=(\sigma_{s,j}^2)^k + \frac{1}{n_s}\,\sum_{i={i_s-1}+1}^{i_s} \left( \left(B_{ij}^k (V_i^k)^{-1}A_i^k\right)^2 - (B_{ij}^k)^2 (V_i^k)^{-1}\right).
\end{equation}
The CM2 step is then to compute $m_{k+1}$ that maximizes the incomplete likelihood conditional on all $(\sigma^2_s)^{k+1}_j$. It is given by the weighted least square estimator below
\begin{equation}
m^{k+1}= \left( \sum_{i=1}^n H_i^T (V_i^{k+1})^{-1} H_i\right)^{-1} \left( \sum_{i=1}^n H_i^T (V_i^{k+1})^{-1} Y_i \right),
\end{equation}
with $V_i^{k+1}$ being equal to $H_i\text{diag}((\sigma^2_s)^{k+1})H_i^{T}+R_i$ for $i_{s-1}+1 \leq i\leq i_s$.

\end{document}